\newtheorem{theorem}{Theorem}
\newtheorem{lemma}{Lemma}
\newtheorem{corollary}{Corollary}
\newtheorem{remark}{Remark}
\title{Analysis of gradient descent methods with non-diminishing, bounded errors}
\author{Arunselvan Ramaswamy
\thanks{email:\texttt{arunselvan@iisc.ac.in}}} 
\affil{Dept. of Computer Science and Automation \\
Indian Institute of Science\\
Bengaluru - 560012, India.}
\author{Shalabh Bhatnagar
\thanks{email:\texttt{shalabh@iisc.ac.in}}}
\affil{Dept. of Computer Science and Automation \\
Indian Institute of Science\\
Bengaluru - 560012, India.
}
\begin{document}
\maketitle
\begin{abstract}
The main aim of this paper is to provide an analysis of gradient descent ($GD$) algorithms
with gradient errors that do not necessarily vanish, asymptotically.
In particular, sufficient conditions are presented for both stability (almost sure boundedness of
the iterates) and convergence of $GD$ with bounded, (possibly) non-diminishing gradient errors. In addition
to ensuring stability, such an algorithm is shown to 
converge to a small neighborhood of the minimum set,
which depends on the gradient errors.
It is worth noting that the main result
of this paper can be used to show that $GD$ with asymptotically vanishing errors indeed converges 
to the
minimum set. 
The results presented herein are not only more general when compared to previous
results, but our analysis of \textit{$GD$ with errors} is new to the literature to the best of our
knowledge.
Our work extends the contributions of Mangasarian \& Solodov,
Bertsekas \& Tsitsiklis and Tadi{\'c} \& Doucet. 
Using our framework, a simple yet effective implementation of $GD$ using simultaneous
perturbation stochastic approximations ($SPSA$), with constant
sensitivity parameters, is presented. 
Another important improvement over many previous results is that there are
no `additional' restrictions imposed on the step-sizes. In machine learning applications
where step-sizes are related
to learning rates, our assumptions, unlike those of other papers, do not affect these learning
rates. Finally,
we present experimental results to validate our theory.
\end{abstract}
\section{Introduction} \label{sec:introduction}
Let us suppose that we are interested in finding a minimum (local/global) of
a continuously differentiable function $f: \mathbb{R}^d \to \mathbb{R}$.
The following gradient descent method ($GD$) is
often employed to find such a minimum:
\begin{equation} \label{eq:GD}
 x_{n+1} = x_n - \gamma (n) \nabla f(x_n).
\end{equation}
In the above equation, $\{\gamma(n)\}_{n \ge 0}$ is the given step-size sequence and
$\nabla f : \mathbb{R}^d \to \mathbb{R}^d$ is a continuous map such that 
$\lVert \nabla f(x) \rVert \le K(1 + \lVert x \rVert)$, $K > 0$ and $x \in \mathbb{R}^d$. 
$GD$ is a popular tool to implement many machine
learning algorithms. For example, the backpropagation algorithm for training neural networks employs 
$GD$ due to its effectiveness and ease of implementation.

When implementing (\ref{eq:GD}), one often uses gradient estimators such as
Kiefer-wolfowitz estimator \cite{kiefer}, simultaneous perturbation stochastic approximation
($SPSA$) \cite{spall}, etc., to obtain estimates of the true gradient at each stage
which in turn results in \textit{estimation errors} ($\epsilon_n$ in (\ref{eq:BP1})).
This is particularly true when the form of $f$ or $\nabla f$ is unknown.
Previously in the literature, convergence of \textit{$GD$ with errors} 
was studied in \cite{bertsekas2000gradient}.
However, their analysis required the errors to go to zero at the rate of the step-size (vanish asymptotically 
at a prescribed rate).
Such assumptions are difficult to enforce and may adversely affect the learning rate 
when employed to implement machine learning algorithms, see
Chapter 4.4 of \cite{haykin}.
In this paper, we present sufficient conditions for both stability (almost sure boundedness)
and convergence (to a small neighborhood of the minimum set) of $GD$ with bounded errors, for which the recursion is given by
\begin{equation} \label{eq:BP1}
x_{n+1} = x_n - \gamma_n (\nabla f(x_n) + \epsilon_n).
\end{equation}
In the above equation $\epsilon_n$ is the estimation error 
at stage $n$ such that $\forall n$ $\lVert \epsilon_n \rVert \le \epsilon$ (a.s. in the case
of stochastic errors) for a fixed $\epsilon > 0$ (positive real).
As an example, consider the problem of estimating the average waiting time of a customer in a queue.
The objective function $J$, for this problem, has the following form:
$
 J(x) = \int w \ d(F(w \mid x)) = E[W(x)],
$
where $W(x)$ is the ``waiting time'' random variable with distribution $F(\cdotp \mid x)$,
with $x$ being the underlying parameter (say the arrival or the service rate). 
In order to define $J$ at every $x$,
one will need to know the entire family of distributions, $\{ F(\cdotp \mid x) \mid x \in \mathbb{R}^d\}$, exactly.
In such scenarios, one often works with approximate definitions of $F$ which in turn lead to
approximate gradients, \textit{i.e,} gradients with errors. More generally,
the gradient errors could be inherent to the problem at hand or due to extraneous noise.
In such cases, there is no reason to believe that these errors will vanish asymptotically.
\textit{To the best of our knowledge, this is the first time 
an analysis is done for $GD$ with biased/unbiased stochastic/deterministic errors that are not necessarily diminishing, and
without imposing `additional' restrictions
on step-sizes over the usual standard assumptions}, see (A2) in Section~\ref{sec:assumptions}. 

Our assumptions, see Section~\ref{sec:assumptions}, not only guarantee stability but also guarantee
convergence of the algorithm to a small neighborhood of the minimum set, where the neighborhood
is a function of the gradient errors. If $\lVert \epsilon_n \rVert \to 0$ as $n \to \infty$,
then it follows from our main result (Theorem~\ref{corr}) that the algorithm converges to an
arbitrarily small neighborhood of the minimum set. \textit{In other words, the algorithm indeed 
converges to the minimum set}.
It may be noted that we do not impose any restrictions on the noise-sequence $\{\epsilon_n\}_{n \ge 0}$,
except that almost surely for all $ n$ $\lVert \epsilon_n \rVert \le \epsilon$ for some fixed
$\epsilon > 0$.
Our analysis uses techniques developed in the field of 
viability theory by \cite{Aubin}, \cite{Benaim05} and \cite{Benaim12}.
Experimental results supporting the analyses in this paper are presented in Section~\ref{experiments}.
\subsection{Our contributions}

 \textbf{(1)} Previous literature such as \cite{bertsekas2000gradient} requires
 $\lVert \epsilon_n \rVert \to 0$ as $n \to \infty$ for it's analysis to work. 
 Further, both \cite{bertsekas2000gradient}
  and \cite{mangasarian} provide conditions that guarantee one of two things $(a)$
  $GD$ diverges almost surely or $(b)$ converges to the minimum set almost surely.
  On the other hand, we
  only require $\lVert \epsilon_n \rVert \le \epsilon$ $\forall \ n$, where $\epsilon > 0$ is fixed a priori.
  Also, we present conditions under which $GD$ with bounded errors is stable 
  (bounded almost surely) and converges to an arbitrarily small
  neighborhood of the minimum set almost surely. 
  \textit{Note that our analysis works regardless of whether or not $\lVert \epsilon_n \rVert$ tends
  to zero.}
  For more detailed comparisons with \cite{bertsekas2000gradient}
  and \cite{mangasarian} see Section~\ref{comparisons}.
  \\ \textbf{(2)} The analyses presented herein will go through even when the gradient errors are ``asymptotically
  bounded'' almost surely. In other words, $\lVert \epsilon_n \rVert \le \epsilon$ for all 
  $n \ge N$ almost surely.
  \textit{Here $N$ may be sample path dependent}.
  \\ \textbf{(3)} Previously, convergence analysis of $GD$ required severe restrictions on the step-size,
  see \cite{bertsekas2000gradient}, \cite{spall}. However, in our paper we do not impose any
  such restrictions on the step-size. See Section~\ref{comparisons}
  (specifically points $1$ and $3$) for more details.
  \\ \textbf{(4)} Informally, the main result of our paper, Theorem~\ref{corr}, states the following. 
  \textit{One wishes to simulate $GD$ with gradient errors that are not
  guaranteed to vanish over time. As a consequence of allowing non-diminishing errors, 
  we show the following: There exists $\epsilon(\delta) > 0$ such that the iterates
  are stable and converge to the $\delta$-neighborhood of the minimum set ($\delta$ being chosen by the simulator)
  as long as $\lVert \epsilon_n \rVert \le \epsilon(\delta)$ $\forall \ n$.
  }
  \\ \textbf{(5)} In Section~\ref{implementBP} we discuss how our framework can be exploited to \textit{undertake
  convenient yet effective implementations of }$GD$. Specifically, we present an implementation using $SPSA$,
  although other implementations can be similarly undertaken. 
\section{Definitions used in this paper}\label{sec:definitions}
\noindent
\textit{\textbf{[Minimum set of a function]}} This set consists of all global and local minima of
the given function.
\\
 \textit{\textbf{[Upper-semicontinuous map]}} We say that $H$ is upper-semicontinuous,
  if given sequences $\{ x_{n} \}_{n \ge 1}$ (in $\mathbb{R}^{n}$) and 
  $\{ y_{n} \}_{n \ge 1}$ (in $\mathbb{R}^{m}$)  with
  $x_{n} \to x$, $y_{n} \to y$ and $y_{n} \in H(x_{n})$, $n \ge 1$, 
  then $y \in H(x)$.
\\
\textit{\textbf{[Marchaud Map]}} A set-valued map $H: \mathbb{R}^n \to \{subsets\ of\ \mathbb{R}^m \}$ 
is called \textit{Marchaud} if it satisfies
the following properties:
 \textbf{(i)} for each $x$ $\in \mathbb{R}^{n}$, $H(x)$ is convex and compact;
 \textbf{(ii)} \textit{(point-wise boundedness)} for each $x \in \mathbb{R}^{n}$,  
 $\underset{w \in H(x)}{\sup}$ $\lVert w \rVert$
 $< K \left( 1 + \lVert x \rVert \right)$ for some $K > 0$;
 \textbf{(iii)} $H$ is \textit{upper-semicontinuous}. \\
Let $H$ be a Marchaud map on $\mathbb{R}^d$.
The differential inclusion (DI) given by
\begin{equation} \label{di}
\dot{x} \ \in \ H(x)
\end{equation}
is guaranteed to have at least one solution that is absolutely continuous. 
The reader is referred to \cite{Aubin} for more details.
We say that $\textbf{x} \in \sum$ if $\textbf{x}$ 
is an absolutely continuous map that satisfies (\ref{di}).
The \textit{set-valued semiflow}
$\Phi$ associated with (\ref{di}) is defined on $[0, + \infty) \times \mathbb{R}^d$ as: \\
$\Phi_t(x) = \{\textbf{x}(t) \ | \ \textbf{x} \in \sum , \textbf{x}(0) = x \}$. Let
$B \times M \subset [0, + \infty) \times \mathbb{R}^d$ and define
$
 \Phi_B(M) = \underset{t\in B,\ x \in M}{\bigcup} \Phi_t (x).
$
\\
\textit{\textbf{[Limit set of a solution]}} The limit set of a solution $\textbf{x}$
with $\textbf{x}(0) = x$ is given by
$L(x) = \bigcap_{t \ge 0} \ \overline{\textbf{x}([t, +\infty))}$.
\\
\textit{\textbf{[Invariant set]}}
$M \subseteq \mathbb{R}^d$ is \textit{invariant} if for every $x \in M$ there exists 
a trajectory, $\textbf{x} \in \sum$, entirely in $M$
with $\textbf{x}(0) = x$, $\textbf{x}(t) \in M$,
for all $t \ge 0$.
\\ \textit{\textbf{[Open and closed neighborhoods of a set]}}
Let $x \in \mathbb{R}^d$ and $A \subseteq \mathbb{R}^d$, then
$d(x, A) : = \inf \{\lVert a- y \rVert \ | \ y \in A\}$. We define the $\delta$-\textit{open neighborhood}
of $A$ by $N^\delta (A) := \{x \ |\ d(x,A) < \delta \}$. The 
$\delta$-\textit{closed neighborhood} of $A$ 
is defined by $\overline{N^\delta} (A) := \{x \ |\ d(x,A) \le \delta \}$.
\\ \textit{\textbf{[$B_r(0)$ and $\overline{B}_r(0)$]}}
The open ball of radius $r$ around the origin is represented by $B_r(0)$,
while the closed ball is represented by $\overline{B}_r(0)$. In other words,
$B_r(0) = \{x \mid \lVert x \rVert < r\}$ and $\overline{B}_r(0) = \{x \mid \lVert x \rVert \le r\}$.
\\ \textit{\textbf{[Internally chain transitive set]}}
$M \subset \mathbb{R}^{d}$ is said to be
internally chain transitive if $M$ is compact and for every $x, y \in M$,
$\epsilon >0$ and $T > 0$ we have the following: There exists $n$ and $\Phi^{1}, \ldots, \Phi^{n}$ that
are $n$ solutions to the differential inclusion $\dot{x}(t) \in h(x(t))$,
points $x_1(=x), \ldots, x_{n+1} (=y) \in M$
and $n$ real numbers 
$t_{1}, t_{2}, \ldots, t_{n}$ greater than $T$ such that: $\Phi^i_{t_{i}}(x_i) \in N^\epsilon(x_{i+1})$ and
$\Phi^{i}_{[0, t_{i}]}(x_i) \subset M$ for $1 \le i \le n$. The sequence $(x_{1}(=x), \ldots, x_{n+1}(=y))$
is called an $(\epsilon, T)$ chain in $M$ from $x$ to $y$. If the above property only holds for all $x=y$,
then $M$ is called \textit{\textbf{chain recurrent}}.
\\ \textit{\textbf{[Attracting set \& fundamental neighborhood]}}
$A \subseteq \mathbb{R}^d$ is \textit{attracting} if it is compact
and there exists a neighborhood $U$ such that for any $\epsilon > 0$,
$\exists \ T(\epsilon) \ge 0$ with $\Phi_{[T(\epsilon), +\infty)}(U) \subset
N^{\epsilon}(A)$. Such a $U$ is called the \textit{fundamental neighborhood} of $A$.
\\ \textit{\textbf{[Attractor set]}}
An \textit{attracting set} that is also invariant
is called an \textit{attractor set}.
The \textit{basin
of attraction } of $A$ is given by $B(A) = \{x \ | \ \omega_\Phi(x) \subset A\}$.
\\
\textit{\textbf{[Lyapunov stable]}} The above set $A$ is Lyapunov stable 
if for all $\delta > 0$, $\exists \ \epsilon > 0$ such that
$\Phi_{[0, +\infty)}(N^\epsilon(A)) \subseteq N^\delta(A)$.
\\ \textit{\textbf{[Upper-limit of a sequence of sets, Limsup]}}
Let $\{K_{n}\}_{n \ge 1}$ be a sequence of sets in $\mathbb{R}^{d}$. 
The \textit{upper-limit} of $\{K_{n}\}_{n \ge 1}$ is
given by, 
$Limsup_{n \to \infty} K_n$ $:= \ \{y \ | \ 
\underset{n \to \infty}{\underline{lim}}d(y, K_n)= 0 \}$. \\
We may interpret that the lower-limit collects the limit points of  
$\{K_n\}_{n \ge 1}$ while the upper-limit
collects its accumulation points.
\section{Assumptions and comparison to previous literature} 
\subsection{Assumptions}\label{sec:assumptions}
Recall that $GD$ with bounded errors is given by the following recursion:
\begin{equation} \label{eq:BP2}
  x_{n+1} = x_n - \gamma(n) g(x_n) ,
\end{equation}
where $g(x_n) \in G(x_n)$ $\forall \ n$ and $G(x) := \nabla f(x) + \overline{B}_\epsilon (0)$, $x \in \mathbb{R}^d$.
In other words, the gradient estimate at stage $n$, $g(x_n)$, belongs to an $\epsilon$-ball 
around the true gradient $\nabla f(x_n)$ at stage $n$. 
Note that (\ref{eq:BP2}) is consistent with (\ref{eq:BP1}) of Section~\ref{sec:introduction}.
Our assumptions, $(A1)$-$(A4)$ are listed below.
\begin{itemize}
 \item[\textbf{(A1)}]  $G(x) := \nabla f(x) + \overline{B}_\epsilon (0)$ for some fixed $\epsilon > 0$.
$\nabla f$ is a continuous function such that 
$\lVert \nabla f(x) \rVert \le K(1 + \lVert x \rVert)$ for all $x \in \mathbb{R}^d$, for some $K > 0$.
\item[\textbf{(A2)}] $\{ \gamma(n) \}_{n \ge 0}$ is the step-size (learning rate) sequence 
such that: $\gamma(n) > 0$ $\forall n$, 
 $\underset{n \ge 0}{\sum} \gamma(n) = \infty$
 and 
 $\underset{n \ge 0}{\sum} \gamma(n)^{2} < \infty$. Without loss of generality we let
 $\underset{n}{sup}\ \gamma(n) \le 1$.  
\end{itemize}
Note that $G$ is an upper-semicontinuous map since $\nabla f$ is continuous and point-wise bounded.
For each $c \ge 1$, we define $G_c (x) := \{ y/c \mid y \in G(cx) \}$.
Define $G_\infty (x) := \overline{co} \left\langle Limsup_{c \to \infty} G_c (x) \right\rangle$,
see Section~\ref{sec:definitions} for the definition of $Limsup$. Given
$S \subseteq \mathbb{R}^d$, the convex closure of $S$, denoted by $\overline{co} \langle S \rangle$,
is the closure of the convex hull of $S$. It is worth noting that
$Limsup_{c \to \infty} G_c (x)$ is non-empty for every $x \in \mathbb{R}^d$. Further, we
show that $G_\infty$ is a Marchaud map in Lemma~\ref{ginfmarchaud}. In other words,
$\dot{x}(t) \in -G_\infty (x(t))$ has at least one solution that is absolutely continuous, see \cite{Aubin}.
Here $-G_{\infty}(x(t))$ is used to denote the set $\{-g \mid g \in G_\infty(x(t))\}$.
\begin{itemize}
 \item[\textbf{(A3)}] $\dot{x}(t) \in -G_\infty (x (t))$ has an attractor set $\mathcal{A}$ such 
that $\mathcal{A} \subseteq B_a (0)$ for some $a > 0$ and $\overline{B}_a (0)$ is a fundamental neighborhood of
$\mathcal{A}$.
\end{itemize}
Since $\mathcal{A} \subseteq B_a (0)$ is compact, we have that
$\underset{x \in \mathcal{A}}{\sup} \lVert x \rVert < a$.
Let us fix the following sequence of real
numbers: $\underset{x \in \mathcal{A}}{\sup} \lVert x \rVert = \delta_1 < \delta_2
< \delta_3 < \delta_4 < a$.
\begin{itemize}
\item[\textbf{(A4)}] Let $c_{n} \ge 1$ be an increasing sequence of integers such that
$c_{n} \uparrow \infty$ as $n \to \infty$. Further, let
$x_{n} \ \rightarrow \ x$ and $y_{n} \ \rightarrow \ y$ as 
$n \ \rightarrow \infty$, such that $y_{n} \in G_{c_{n}}(x_{n})$, $\forall n$, 
then $y \in G_{\infty}(x)$.
\end{itemize}
\textit{\textbf{
It is worth noting that the 
existence of a global Lyapunov function for $\dot{x}(t) \in -G_\infty (x(t))$
is sufficient to guarantee that $(A3)$ holds. Further, $(A4)$ is satisfied when $\nabla f$ is Lipschitz continuous.}}
\begin{lemma} \label{ginfmarchaud}
 $G_\infty$ is a Marchaud map.
\end{lemma}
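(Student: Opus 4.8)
The plan is to verify the three defining properties of a Marchaud map for $G_\infty$ one at a time, disposing of convexity, closedness and the growth bound quickly, and reserving the real effort for upper-semicontinuity. Convexity and closedness come for free: by construction $G_\infty(x) = \overline{co}\langle Limsup_{c\to\infty} G_c(x)\rangle$ is the closure of a convex hull, hence closed and convex. Together with the norm bound established next, closedness plus boundedness in $\mathbb{R}^d$ yields compactness, so property (i) will follow once (ii) is in hand (nonemptiness of $Limsup_{c\to\infty} G_c(x)$ is already granted).

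For the point-wise bound (ii), I would compute directly. Any $y \in G(cx)$ has the form $y = \nabla f(cx) + b$ with $\lVert b \rVert \le \epsilon$, so by (A1), $\lVert y/c \rVert \le (K(1 + c\lVert x \rVert) + \epsilon)/c = K\lVert x \rVert + (K+\epsilon)/c$. Every point of $Limsup_{c\to\infty} G_c(x)$ is a limit of such $y/c$ along some $c \to \infty$, hence has norm at most $K\lVert x \rVert$; since convex combinations and closure do not increase the supremum norm, $\sup_{w \in G_\infty(x)} \lVert w \rVert \le K\lVert x \rVert < K(1 + \lVert x \rVert)$ because $K>0$. This gives (ii) with the same constant $K$ and supplies the boundedness needed for compactness in (i).

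The hard part is upper-semicontinuity (iii), i.e. the closed-graph property. Suppose $x_n \to x$, $w_n \to w$ and $w_n \in G_\infty(x_n)$; I want $w \in G_\infty(x)$. The key structural observation is that $Limsup_{c\to\infty} G_c(x_n)$ is compact (closed, and by the previous paragraph contained in $\overline{B}_{K\lVert x_n\rVert}(0)$), so its convex hull is already compact and $G_\infty(x_n)$ coincides with that hull. By Carathéodory's theorem I may then write $w_n = \sum_{i=1}^{d+1} \lambda_i^n v_i^n$ with $\lambda^n$ in the unit simplex and $v_i^n \in Limsup_{c\to\infty} G_c(x_n)$. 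The simplex is compact and, since $x_n$ converges, the $v_i^n$ are uniformly bounded, so passing to a subsequence I get $\lambda_i^n \to \lambda_i$ and $v_i^n \to v_i$, whence $w = \sum_i \lambda_i v_i$. Because $G_\infty(x)$ is convex and closed, it now suffices to prove that each $v_i \in G_\infty(x)$.

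This last reduction is where (A4) enters through a diagonal extraction. Fix $i$. Since $v_i^n \in Limsup_{c\to\infty} G_c(x_n)$, there are $c$-values tending to infinity and points of $G_c(x_n)$ converging to $v_i^n$; I would select for each $n$ one such point $u_n \in G_{c_n}(x_n)$ with $\lVert u_n - v_i^n \rVert < 1/n$ and with $c_n$ chosen to exceed both $n$ and the previously picked value, which is possible exactly because the approximating $c$-values are unbounded. Then $c_n \uparrow \infty$, and $\lVert u_n - v_i \rVert \le \lVert u_n - v_i^n \rVert + \lVert v_i^n - v_i \rVert \to 0$, so $u_n \to v_i$ with $u_n \in G_{c_n}(x_n)$ and $x_n \to x$; assumption (A4) then delivers $v_i \in G_\infty(x)$, finishing the proof. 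I expect the main obstacle to be organizing this diagonal construction so that the $c_n$ are genuinely increasing while the doubly-indexed convergence is controlled uniformly in $i$; the Carathéodory step is precisely what lets (A4), which is phrased for a single sequence rather than for convex combinations, be applied coordinate-by-coordinate.
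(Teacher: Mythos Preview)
Your proof is correct, but your argument for upper-semicontinuity takes a genuinely different route from the paper's. The paper argues by contradiction via a separating hyperplane: assuming $w \notin G_\infty(x)$, it produces a linear functional $f$ strictly separating $w$ from $G_\infty(x)$, then shows that for large $n$ the set $A(x_n) := Limsup_{c\to\infty} G_c(x_n)$ must meet the half-space $[f \ge \alpha + \epsilon/2]$ (else $G_\infty(x_n) \subseteq [f \le \alpha + \epsilon/2]$, contradicting $w_n \to w$), picks points $z_n$ in that intersection, extracts a convergent subsequence, and applies the same diagonal construction plus $(A4)$ to land the limit inside $G_\infty(x)$, contradicting the separation. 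You instead exploit finite-dimensionality directly: since $A(x_n)$ is compact, $G_\infty(x_n)$ is already its convex hull, and Carath\'eodory lets you decompose each $w_n$ into at most $d+1$ extreme ingredients from $A(x_n)$; after extracting limits coordinate-by-coordinate, $(A4)$ places each limiting ingredient in $G_\infty(x)$. Your approach is more constructive and avoids the separation machinery, at the price of invoking Carath\'eodory and the compactness of the convex hull of a compact set in $\mathbb{R}^d$; the paper's approach is slightly more indirect but would transfer more readily to settings where Carath\'eodory is unavailable. Your worry in the last paragraph about handling the $d+1$ coordinates simultaneously is unfounded: since there are only finitely many $i$, you may simply run the diagonal extraction once for each $i$ separately.
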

\begin{proof}
 From the definition of $G_\infty$ and $G$ we have that $G_\infty (x)$ is convex, compact
and $\underset{y \in G(x)}{\sup} \lVert y \rVert \le K(1 + \lVert x \rVert)$ for every $x \in \mathbb{R}^d$.
It is left to show that $G_\infty$ is an upper-semicontinuous map.
Let $x_{n} \to x$, $y_{n} \to y$ and $y_{n} \in G_{\infty}(x _n)$, 
 for all $n \ge 1$.
 We need to show that $y \in G_{\infty}(x)$. We present a proof by contradiction.
 Since $G_{\infty}(x)$ is convex and compact, $y \notin G_{\infty}(x)$
 implies that there exists a linear functional on $\mathbb{R}^{d}$, say $f$, such that
 $\underset{z \in G_{\infty}(x)}{sup}$ $f(z) \le \alpha - \epsilon$
 and $f(y) \ge \alpha + \epsilon$, for some
 $\alpha \in \mathbb{R}$ and $\epsilon > 0$. Since $y_{n} \to y$, there exists 
 $N > 0$ such that for all $n \ge N$, $f(y_{n}) \ge \alpha + \frac{\epsilon}{2}$. In other
 words, $G_{\infty}(x) \cap  [f \ge \alpha + \frac{\epsilon}{2}] \neq \phi$ for
 all $n \ge N$. We use the notation $[f \ge a]$ to denote the set
 $\left\{ x \ |\ f(x) \ge a \right\}$. For the sake of convenience let us denote the set
 $Limsup_{c \to \infty}G_{c}(x)$ by $A(x)$, where $x \in \mathbb{R}^{d}$.
 We claim that $A(x_{n}) \cap [f \ge \alpha + \frac{\epsilon}{2}] \neq \phi$
 for all $n \ge N$. We prove this claim later,
 for now we assume that the claim
 is true and proceed. Pick $z_{n} \in A(x_{n}) \cap [f \ge \alpha + \frac{\epsilon}{2}]$
 for each $n \ge N$. It can be shown that $\{z_{n}\}_{n \ge N}$ is norm bounded
 and hence contains a convergent subsequence, 
 $\{z_{n(k)}\}_{k \ge 1} \subseteq \{z_{n}\}_{n \ge N}$. 
 Let $\underset{k \to \infty}{\lim} z_{n(k)} = z$.
Since $z_{n(k)} \in Limsup_{c \to \infty}(G_{c}(x_{n(k)}))$, 
 $\exists$ $c_{n(k)} \in \mathbb{N}$ such that $\lVert w_{n(k)} - z_{n(k)} \rVert
 < \frac{1}{n(k)}$, where $w_{n(k)} \in G_{c_{n(k)}}(x_{n(k)})$. 
 We choose the sequence
 $\{c_{n(k)}\}_{k \ge 1}$ such that $c_{n(k+1)} > c_{n(k)}$ for each $k \ge 1$.
 \\ \indent
 We have the following: $c_{n(k)} \uparrow \infty$, $x_{n(k)} \to x$, 
 $w_{n(k)} \to z$ and $w_{n(k)} \in G_{c_{n(k)}}(x_{n(k)})$, for all $ k \ge 1$. 
 It follows
 from assumption $(A4)$ that $z \in G_{\infty}(x)$. Since $z_{n(k)} \to z$
 and $f(z_{n(k)}) \ge \alpha + \frac{\epsilon}{2}$ for each $k \ge 1$, we have that
 $f(z) \ge \alpha + \frac{\epsilon}{2}$. This contradicts the earlier conclusion that
 $\underset{z \in h_{\infty}(x)}{sup}$ $f(z) \le \alpha - \epsilon$.
 \\ \indent
 It remains to prove that  $A(x_{n}) \cap [f \ge \alpha + \frac{\epsilon}{2}] \neq \phi$
 for all $n \ge N$. If this were not true, then
 $\exists \{m(k)\}_{k \ge 1} \subseteq \{n \ge N\}$ 
 such that $A(x_{m(k)}) \subseteq [f < \alpha + \frac{\epsilon}{2}]$
 for all $k$. It follows that
$G_\infty(x_{m(k)}) = \overline{co}(A(x_{m(k)})) \subseteq 
 [f \le \alpha + \frac{\epsilon}{2}]$ for each $k \ge 1$. 
 Since $y_{n(k)} \to y$, $\exists N_{1}$ such that for all $n(k) \ge N_1$, 
 $f(y_{n(k)}) \ge \alpha + \frac{3 \epsilon}{4}$. This is a contradiction.
\end{proof}
 \subsection{Relevance of our results} \label{comparisons}
 \noindent
  \textbf{(1)} Gradient algorithms with errors have been previously studied by 
  Bertsekas and Tsitsiklis \cite{bertsekas2000gradient}. They impose the following restriction on the 
 estimation errors: $\lVert \epsilon_n \rVert \le \gamma(n)(q + p \lVert \nabla f(x_n) \rVert)$
 $\forall \ n$, where $p, q > 0$. If the iterates are stable then $\lVert \epsilon_n \rVert
 \to 0$. In order to satisfy the aforementioned assumption the choice of step-size 
 may be restricted, thereby affecting the learning rate (when used within the framework of a learning algorithm). 
 In this paper we analyze the more
 general and practical case of bounded $\lVert \epsilon_n \rVert$ which does not necessarily 
 go to zero. Further \textit{none} of the assumptions used in our paper impose further restrictions 
 on the step-size, other than standard requirements, see $(A2)$. \\
 \textbf{(2)} The main result of Bertsekas and Tsitsiklis \cite{bertsekas2000gradient} states that
 the $GD$ with errors either diverges almost surely or converges to the minimum set almost surely.
 An older study by Mangasarian and Solodov \cite{mangasarian} 
 shows the exact same result as \cite{bertsekas2000gradient} but for
 $GD$ without estimation errors ($\epsilon_n = 0 \ \forall \ n$). The main results of our
 paper, Theorems~\ref{stability} \& \ref{corr} show that if the $GD$  under
 consideration satisfies $(A1)$-$(A4)$ then the iterates are stable (bounded almost surely).
 Further, the algorithm is guaranteed to converge to a \textit{given 
 small neighborhood of the minimum set} provided the estimation errors are bounded by
 a constant that is a function of the neighborhood size.
 To summarize, under the more restrictive setting of \cite{bertsekas2000gradient}
 and \cite{mangasarian} the $GD$ is \textit{not} guaranteed to be stable, 
 see the aforementioned references, while the assumptions
 used in our paper are less restrictive and guarantee stability
 under the more general setting of bounded error $GD$. \textit{It may also be noted that $\nabla f$
 is assumed to be Lipschitz continuous by \cite{bertsekas2000gradient}. This turns out to be sufficient
 (but not necessary)
 for $(A1)$ \& $(A4)$ to be satisfied}.\\
 \textbf{(3)} The analysis of Spall \cite{spall} can be used to analyze a variant of $GD$ that uses $SPSA$
 as the gradient estimator. Spall introduces a gradient sensitivity parameter $c_n$ in order
 to control the estimation error $\epsilon_n$ at stage $n$. It is assumed that
 $c_n \to 0$ and $\sum_{n \ge 0} \left( \frac{\gamma(n)}{c_n} \right)^2 < \infty$,
 see A1, Section III, \cite{spall}. Again, this restricts the choice of step-size
 and affects the learning rate.
 In this setting our analysis works for the more practical scenario where $c_n = c$ for all $n$
 \textit{i.e.,} a constant, see Section~\ref{implementBP}.
 \\
 \textbf{(4)} The important advancements of this paper are the following: (i) Our framework is more general and practical since
 the errors are not required to go to zero;
 (ii) We provide easily verifiable, 
 non-restrictive set of assumptions that ensure almost sure boundedness and convergence
 of $GD$ and (iii) Our assumptions $(A1)$-$(A4)$ do not affect the choice of step-size.
 \\
 \textbf{(5)}
 Tadi{\'c} and Doucet \cite{tadic} showed that GD with bounded
 non-diminishing errors converges to a small neighborhood of the minimum set. 
 They make the following key assumption:
 \textbf{(A)} There exists $p \in (0,1]$, such that for every compact set $Q \subset \mathbb{R}^d$
 and every $\epsilon \in [0, \infty)$, $m(A_{Q,\epsilon}) \le M_Q \epsilon^p$, where 
 $A_{Q,\epsilon} = \{ f(x) \mid x \in Q, \lVert f(x) \rVert \le \epsilon \}$ and $M_Q \in [1, \infty)$.
 
 Note that $m(A)$ is the Lebesgue measure of the set $A \subset \mathbb{R}^d$. The above assumption holds if $f$
 is $d_0$ times differentiable, where $d < d_0 < \infty$, see \cite{tadic} for details. In comparison,
 we only require that the chain recurrent set of $f$ be a subset of it's minimum set. One sufficient condition
 for this is given in \textit{Proposition 4} of Hurley \cite{hurley}.
 \begin{remark} \label{benaimremark}
 Suppose the minimum set $\mathcal{M}$ of $f$, contains the chain recurrent set of 
 $\dot{x}(t) = -\nabla f(x(t))$, then it can be shown that GD without errors ($\epsilon = 0$ in
 (\ref{eq:BP2})) will converge to $\mathcal{M}$
 almost surely, see \cite{Benaim96}. On the other hand suppose
there are chain recurrent points outside $\mathcal{M}$, it may converge
to this subset (of the chain recurrent set) outside $\mathcal{M}$.
In Theorem~\ref{corr}, we will use the upper-semicontinuity of chain recurrent sets (Theorem 3.1
of Bena\"{i}m, Hofbauer and Sorin \cite{Benaim12}), to show that GD with errors will converge to a small 
neighborhood of the limiting set of the ``corresponding GD without errors''. 
In other words, GD with errors converges
to a small neighborhood of the minimum set provided the corresponding GD without errors converges to the minimum set. 
This will trivially happen if the chain recurrent set of $\dot{x}(t) = -\nabla f(x(t))$ is a subset
of the minimum set of $f$, which we implicitly assume is true. 
Suppose GD without errors does not converge to the minimum set, then it is reasonable 
to expect that GD with errors may not converge to a small neighborhood of the minimum set.

Suppose $f$ is continuously differentiable and it's regular values
(\textit{i.e.,} $x$ for which $\nabla f(x) \neq 0$) are dense in $\mathbb{R}^d$,
then the chain recurrent set of $f$ is a subset of it's minimum set,
see \textit{Proposition 4} of Hurley \cite{hurley}.
We implicitly assume that an assumption of this kind is satisfied.
\end{remark}
\section{Proof of stability and convergence} \label{sec:traj}
We use (\ref{eq:BP2}) to construct the linearly interpolated trajectory, $\overline{x}(t)$
for $t \in [0, \infty)$. First, define 
$t(0) := 0$ and $t(n) \ := \ \sum_{i=0}^{n-1} \gamma (i)$ for $n \ge 1$. Then, define
$\overline{x}(t(n)) \ := \ x_{n}$ and for 
$t \ \in \ [t(n), t(n+1)]$,
$\overline{x}(t)$ is the continuous linear interpolation of $\overline{x}(t_n)$
and $\overline{x}(t_{n+1})$.
We also construct the following piece-wise constant trajectory $\overline{g}(t)$, $t \ge 0$
as follows: $\overline{g}(t) := g(x_{n})$ for $t \in [t(n), t(n+1))$, $n \ge 0$.

We need to divide time, $[0, \infty)$, into intervals of length $T$, where
$T = T(\delta_2 - \delta_1) + 1$.
Note that $T(\delta_2 - \delta_1)$ is such that
$\Phi_t (x_0) \in N^{\delta_2 - \delta_1} (\mathcal{A})$ for $t \ge T(\delta_2 - \delta_1)$, where 
$\Phi_t (x_0)$ denotes solution to 
$\dot{x}(t) \in G_\infty (x(t))$ at time $t$ with initial condition $x_0$ and
$x_0 \in \overline{B}_a (0)$. Note that $T(\delta_2 - \delta_1)$ is independent of the
initial condtion $x_0$, see Section~\ref{sec:definitions} for more details.
Dividing time is done as follows:
define $T_{0} \ := \ 0$ and $T_{n} \ := \ min\{ t(m) \ : \ t(m) \ge T_{n-1}+T\}$, $n \ge 1$.
Clearly, there exists a subsequence $\{t(m(n))\}_{n \ge 0}$ of $\{t(n)\}_{n \ge 0}$ 
such that $T_{n} = t(m(n))$ $\forall \ n \ge 0$.
In what follows we use $t(m(n))$ and $T_n$ interchangeably.

To show stability, we use a projective scheme where the iterates are projected 
periodically, with period $T$, onto the closed ball
of radius $a$ around the origin, $\overline{B}_a (0)$. 
Here, the radius $a$ is given by $(A3)$. This projective scheme gives rise to the following
rescaled trajectories $\hat{x} (\cdotp)$ and $\hat{g} (\cdotp)$.
First, we construct $\hat{x}(t)$, $t \ge 0$: Let
$t \in [T_{n}, T_{n+1})$ for some $n \ge 0$,
then
$\hat{x}(t) := \frac{\overline{x}(t)}{r(n)}$, where 
$r(n) = \frac{\lVert\overline{x}(T_{n}) \rVert}{a} \vee 1$ ($a$ is defined in $(A3)$).
Also, let $\hat{x}(T_{n+1}^{-})\ := $  
$\underset{t \uparrow T_{n+1}}{\lim} \hat{x}(t)$, $t \in \left[ T_{n}, T_{n+1} \right)$.
The `rescaled $g$ iterates' are given by $\hat{g}(t) := 
\frac{\overline{g}(t)}{r(n)}$.
Let $x^n (t)$, $t \in [0,T]$ be
the solution (upto time $T$) to
$\dot{x}^{n}(t) = - \hat{g}(T_{n} + t)$, with the initial condition 
$x^{n}(0) = \hat{x}(T_{n})$, recall the definition of $\hat{g}(\cdotp)$ 
from the beginning of Section~\ref{sec:traj}. Clearly, we have
\begin{equation}
 x^{n}(t)\ = \ \hat{x}(T_{n}) - \int_{0}^{t} \hat{g}(T_{n}+z) \,dz .
\end{equation}
We begin with a simple lemma which essentially claims that
$\{x^n(t), 0 \le t \le T \mid n \ge 0\}$ =
$\{\hat{x}(T_n + t), 0 \le t \le T \mid n \ge 0 \}$. The proof is a direct consequence
of the definition of $\hat{g}$ and is hence omitted.
\begin{lemma}
\label{difftozero}
 For all $n \ge 0$, we have $x^n(t) = \hat{x}(T_n+t)$, where $t \in [0 , T]$.
\end{lemma}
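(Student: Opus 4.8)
The claim is that the solution $x^n(t)$ of the ODE $\dot{x}^n(t) = -\hat{g}(T_n + t)$ with initial condition $x^n(0) = \hat{x}(T_n)$ coincides with the rescaled interpolated trajectory $\hat{x}(T_n + t)$ on $[0,T]$. The plan is to exploit the fact that both sides are, by construction, piecewise-linear functions of $t$ that share the same initial value and the same derivative on each subinterval, so they must agree.

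First I would write out the integral form explicitly. We have $x^n(t) = \hat{x}(T_n) - \int_0^t \hat{g}(T_n + z)\,dz$. Since $\hat{g}$ is piecewise constant — indeed $\hat{g}(s) = \overline{g}(s)/r(n)$ for $s \in [T_n, T_{n+1})$, and $\overline{g}(s) = g(x_m)$ on each $[t(m), t(m+1))$ — the integral is a sum of contributions of the form $\gamma(m) g(x_m)/r(n)$ over the sampling instants $t(m)$ lying in $[T_n, T_n + t]$. The key observation is that this is exactly the telescoping of the recursion (\ref{eq:BP2}): on each elementary step, $\overline{x}(t(m+1)) - \overline{x}(t(m)) = x_{m+1} - x_m = -\gamma(m) g(x_m)$, and after dividing by the single scaling constant $r(n)$ that is held fixed throughout $[T_n, T_{n+1})$, the increment of $\hat{x}$ over $[t(m), t(m+1)]$ matches $-\gamma(m) g(x_m)/r(n) = -\int_{t(m)}^{t(m+1)} \hat{g}(T_n + z)\,dz$ after the obvious reindexing.

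Concretely I would argue by verifying the two functions agree at every grid point $T_n + (t(k) - T_n)$ for the sampling instants $t(k)$ in $[T_n, T_{n+1})$ and then note that both are linear in between (the interpolation $\overline{x}$ is linear on each $[t(m), t(m+1)]$ by definition, and $x^n$ is linear there because its derivative $-\hat{g}$ is constant there). Agreement at the left endpoint $t=0$ is the initial condition $x^n(0) = \hat{x}(T_n)$; agreement of the derivatives on each open subinterval is immediate from the definitions of $\hat{g}$ and $\hat{x}$ sharing the same $r(n)$; hence the two piecewise-linear functions coincide on all of $[0,T]$.

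The only point requiring care — and the reason the paper itself omits the proof as routine — is bookkeeping at the two endpoints of the block: ensuring that the constant $r(n)$ really is the same across the entire interval $[T_n, T_{n+1})$ (it is, since $r(n)$ depends only on $\overline{x}(T_n)$), and handling the possibly partial final subinterval when $T_n + t$ falls strictly inside some $[t(m), t(m+1))$. Since on that partial piece both $x^n$ and $\hat{x}(T_n + \cdot)$ advance linearly with the same slope $-g(x_m)/r(n)$, the identity persists there as well. I expect no genuine obstacle; the entire content is that interpolation and integration of a piecewise-constant field produce the same piecewise-linear curve.
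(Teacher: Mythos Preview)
Your proposal is correct and is exactly the kind of argument the paper has in mind: the authors explicitly omit the proof, remarking only that it ``is a direct consequence of the definition of $\hat{g}$.'' Your unpacking of this---matching initial values, noting that both $x^n(\cdot)$ and $\hat{x}(T_n+\cdot)$ are piecewise linear with the common slope $-g(x_m)/r(n)$ on each $[t(m),t(m+1))$, and handling the partial final subinterval---is precisely the routine verification the paper elides.
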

It directly follows from Lemma~\ref{difftozero} that 
$\{x^n (t), t \in [0,T] \ |\ n\ge 0 \}$ = 
$\{\hat{x}(T_{n} + t), t \in [0,T]\ |\ n \ge 0\}$. 
In other words, the two families of $T$-length trajectories,
$\{x^n (t), t \in [0,T] \ |\ n\ge 0 \}$ and 
$\{\hat{x}(T_{n} + t), t \in [0,T]\ |\ n \ge 0\}$, are really one and the same.
When viewed as a subset of $C([0,T], \mathbb{R}^d)$, 
$\{x^n (t), t\in [0,T] \ |\  n \ge 0\}$ is equi-continuous and point-wise bounded. 
Further, from the \textit{Arzela-Ascoli} theorem we conclude that it is relatively compact.
In other words, $\{\hat{x}(T_{n} + t), t \in [0,T] \ |\ n \ge 0\}$
is relatively compact in $C([0,T], \mathbb{R}^{d})$.
\begin{lemma}
\label{closertoode}
 Let $r(n) \uparrow \infty$, then any limit point of $\left\{ \hat{x}(T_{n}+t), t \in [0,T] 
 : n \ge 0 \right\}$ is of the form $x(t) = x(0) +  \int_0^t g_\infty(s) \ ds$, where 
 $y: [0,T] \rightarrow \mathbb{R}^{d}$ is a measurable function and
 $g_\infty(t) \in G_{\infty}(x(t))$, $t \in [0,T]$.
\end{lemma}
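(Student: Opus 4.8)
The plan is to take an arbitrary limit point $x(\cdot)$ of the family and verify the integral representation by passing to the limit in the rescaled integral equation. Since $\{\hat{x}(T_n+t),\ t\in[0,T] : n\ge 0\}$ was just shown (via \textit{Arzela--Ascoli}) to be relatively compact in $C([0,T],\mathbb{R}^d)$, I would fix a subsequence along which $\hat{x}(T_{n_k}+\cdot)\to x(\cdot)$ uniformly on $[0,T]$; in particular $\hat{x}(T_{n_k})\to x(0)$. By Lemma~\ref{difftozero} each member satisfies $\hat{x}(T_{n_k}+t)=\hat{x}(T_{n_k})-\int_0^t \hat{g}(T_{n_k}+z)\,dz$, so the whole task reduces to showing that the integrands converge, in a suitable sense, to a measurable selection of $G_\infty(x(\cdot))$, matching the inclusion $\dot{x}\in -G_\infty(x)$ of (A3). (The sign in the statement is to be read with $g_\infty$ the selection of $G_\infty$ and the integral carrying the sign of the recursion $\dot{x}^n=-\hat g$.)

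Next I would establish the bounds needed to extract the limiting integrand. For $T_n+z\in[t(m),t(m+1))$ one has $\hat{g}(T_n+z)=g(x_m)/r(n)$, and by (A1) $\lVert g(x_m)\rVert\le K(1+\lVert x_m\rVert)$, whence $\lVert \hat g(T_n+z)\rVert\le K\bigl(1+\lVert \hat x(t(m))\rVert\bigr)$; since $\lVert \hat x(T_n)\rVert\le a$ by the choice of $r(n)$, a Gronwall estimate over the fixed horizon $[0,T]$ bounds $\sup_n\sup_{z\in[0,T]}\lVert \hat g(T_n+z)\rVert$ by a constant independent of $n$. Hence $\{\hat g(T_{n_k}+\cdot)\}$ is bounded in $L^2([0,T],\mathbb{R}^d)$, and after a further subsequence I may assume $\hat g(T_{n_k}+\cdot)\rightharpoonup g_\infty$ weakly in $L^2$. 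Weak convergence suffices to pass to the limit in the integral, giving $x(t)=x(0)-\int_0^t g_\infty(z)\,dz$ for every $t\in[0,T]$, so that it only remains to locate $g_\infty(z)$ inside $G_\infty(x(z))$.

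For the set membership I would use the definition $G_c(x)=\{y/c:y\in G(cx)\}$: since $\hat x(t(m))=x_m/r(n)$, the identity above reads $\hat g(T_n+z)\in G_{r(n)}\bigl(\hat x(t(m))\bigr)$, i.e.\ the rescaled gradient sample lies in $G_{r(n)}$ evaluated at the grid-sampled rescaled iterate. Because $\gamma(m)\to 0$ and $\hat g$ is uniformly bounded, $\lVert \hat x(t(m))-\hat x(T_{n_k}+z)\rVert\to 0$ uniformly in $z$, so the grid-sampled points converge uniformly to $x(z)$ while $r(n_k)\uparrow\infty$. This is exactly the hypothesis of (A4): any convergent sub-subsequence of $\{\hat g(T_{n_k}+z)\}_k$ has its limit in $G_\infty(x(z))$. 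Combined with the uniform boundedness, this upgrades to the statement that for a.e.\ $z$ and every $\eta>0$ there is $K$ with $\hat g(T_{n_k}+z)\in N^\eta\bigl(G_\infty(x(z))\bigr)$ for all $k\ge K$ (otherwise a bounded subsequence staying at distance $\ge\eta$ would cluster at a point outside $G_\infty(x(z))$, contradicting (A4)).

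The main obstacle is bridging the weak limit $g_\infty$ with this pointwise neighborhood membership, since weak convergence gives no pointwise information directly. I would resolve this with Mazur's lemma: convex combinations of the tails $\{\hat g(T_{n_j}+\cdot):j\ge k\}$ converge strongly in $L^2$ to $g_\infty$, hence pointwise a.e.\ along a subsequence. For a.e.\ fixed $z$ and each $\eta>0$, for large indices every term lies in $N^\eta\bigl(G_\infty(x(z))\bigr)$; as $G_\infty(x(z))$ is convex and compact (Lemma~\ref{ginfmarchaud}), the set $N^\eta\bigl(G_\infty(x(z))\bigr)=G_\infty(x(z))+\overline{B}_\eta(0)$ is convex, so these convex combinations remain in it, and their strong limit lies in $\overline{N^\eta}\bigl(G_\infty(x(z))\bigr)$. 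Letting $\eta\downarrow 0$ and using closedness of $G_\infty(x(z))$ yields $g_\infty(z)\in G_\infty(x(z))$ for a.e.\ $z\in[0,T]$, which together with $x(t)=x(0)-\int_0^t g_\infty(z)\,dz$ gives the asserted representation and completes the proof.
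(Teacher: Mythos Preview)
Your argument is correct and follows essentially the same route as the paper's proof: bound $\hat g$, extract a weak $L^2$ limit, pass to the limit in the integral identity from Lemma~\ref{difftozero}, and then use (A4) together with convexity of $G_\infty(x(z))$ to place $g_\infty(z)$ in $G_\infty(x(z))$. The only cosmetic difference is that the paper upgrades weak to strong convergence via a Banach--Saks step (Ces\`aro averages of a subsequence) rather than Mazur's lemma; in both cases one then passes to a.e.\ convergence along a further subsequence and uses the same contradiction with (A4) to force $d\bigl(\hat g(T_{n_k}+z),G_\infty(x(z))\bigr)\to 0$.
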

\begin{proof}
For $t \ge 0$, define $[t] := max\{t(k) \ |\ t(k) \le t\}$.
Observe that for any $t \in [T_{n}, T_{n+1})$, we have $\hat{g}(t) \in G_{r(n)}(\hat{x}([t]))$
and $\lVert \hat{g}(t) \rVert$ $\le $ $K \left( 1 +  \lVert \hat{x}([t]) \rVert\right)$,
since $G_{r(n)}$ is a Marchaud map. Since $\hat{x}(\cdotp)$ is the rescaled trajectory
obtained by periodically projecting the original iterates onto a compact set, it follows that
$\hat{x}(\cdotp)$ is bounded a.s. \textit{i.e.,} 
$\underset{t \in [0, \infty)}{sup}$ $\lVert \hat{x}(t) \rVert < \infty$ $a.s.$
It now follows from the observation made earlier that $\underset{t \in [0, \infty)}{sup}$ $\lVert \hat{g}(t) \rVert < \infty$ $a.s.$

 Thus, we may deduce that there exists a sub-sequence of 
 $\mathbb{N}$, say
 $\{ l\} \subseteq \{n\}$, such that $\hat{x}(T_{l}+ \cdotp) \to x(\cdotp)$ in $C \left([0,T], 
 \mathbb{R}^{d} \right)$ and $\hat{g}(m(l)+ \cdotp) \to g_\infty(\cdotp)$ weakly in $L
 _{2} \left( [0,T], \mathbb{R}^{d} \right)$. From Lemma~\ref{difftozero}
 it follows that
 $x^{l}(\cdotp) \to x(\cdotp)$ in $C \left( [0,T], \mathbb{R}^{d} \right)$. Letting $r(l) 
 \uparrow \infty$ in
 \begin{equation}\nonumber
 x^{l}(t) = x^{l}(0) - \int_{0}^{t} \hat{g}(t(m(l) + z)) \,dz , \ t \ \in \ [0, T],
 \end{equation}
 we get $x(t) = x(0) - \int_{0}^{t} g_\infty(z) dz$ for $t \in [0,T]$. 
Since $\lVert \hat{x}(T_{n})
 \rVert \ \le \ 1$ we have $\lVert x(0) \rVert \ \le \ 1$.

 Since $\hat{g}(T_{l}+ \ \cdotp) \to g_\infty(\cdotp)$ weakly in 
 $L_{2} \left( [0,T], \mathbb{R}^{d} \right)$,
 there exists $\{l(k)\} \subseteq \{l\}$
 such that \[\frac{1}{N} \sum_{k=1}^{N} \hat{g}(T_{l(k)}+ \ \cdotp) \to g_\infty(\cdotp)
 \text{ strongly in }L_{2} \left( [0,T], \mathbb{R}^{d} \right).\]
  Further, there exists
 $\{N(m)\} \subseteq \{N\}$ such that 
 \[\frac{1}{N(m)} \sum_{k=1}^{N(m)} \hat{g}(T_{l(k)}+ \ \cdotp) \to g_\infty(\cdotp) \text{ \textit{a.e.} on } 
 [0,T].\]

 Let us fix $t_0 \in$ $\{ t \ |\ \frac{1}{N(m)} \sum_{k=1}^{N(m)} \hat{g}(T_{l(k)}+ \ t) \to g_\infty(t),\ t \in [0,T] \}$,
 then
 \begin{equation}\nonumber
  \lim_{N(m) \to \infty} \frac{1}{N(m)} \sum_{k=1}^{N(m)} \hat{g}(T_{l(k)}+ \ t_0) = g_\infty(t_0).
 \end{equation}
Since $G_{\infty}(x(t_0))$ is convex and compact (Proposition~\ref{ginfmarchaud}), 
to show that $g_\infty(t_0) \in G_{\infty}(x(t_0))$ it is enough to show
$
\underset{l(k) \to \infty}{\lim} d\left(\hat{g}(T_{l(k)} +  t_0), G_{\infty}(x(t_0))\right) = 0.
$
Suppose this is not true and $\exists$ $\epsilon > 0$ and $\{n(k)\} \subseteq \{l(k)\}$ such that
$d\left(\hat{g}(T_{n(k)} +  t_0), G_{\infty}(x(t_0))\right) > \epsilon$. Since
$\{\hat{g}(T_{n(k)} +  t_0)\}_{k \ge 1}$ is norm bounded, it follows that there is 
a convergent sub-sequence. For convenience, assume
$\underset{k \to \infty}{\lim}$ $ \hat{g}(T_{n(k)} +  t_0)  = g_0$, for some
$g_0 \in \mathbb{R}^{d}$. Since $\hat{g}(T_{n(k)} +  t_0) \in G_{r(n(k))}(\hat{x}([T_{n(k)} +  t_0]))$
and $\underset{k \to \infty}{\lim}$ $ \hat{x}([T_{n(k)} +  t_0])  = x(t_0)$, it
follows from assumption $(A4)$ that $g_0 \in G_{\infty}(x(t_0))$. 
This leads to a contradiction.
\end{proof}
 Note that in the statement of Lemma~\ref{closertoode} 
 we can replace `$r(n) \uparrow \infty$' by `$r(k) \uparrow \infty$',
 where $\{ r(k)) \}$ is a subsequence of $\{ r(n) \}$.
 Specifically we can conclude that
 any limit point of $\left\{ \hat{x}(T_{k}+t), t \in [0,T] \right\} _{
  \{k\} \subseteq \{ n \} }$ in $C([0,T], \mathbb{R}^d)$, conditioned on $r(k) \uparrow \infty$, 
  is of the form $x(t) = x(0) -  \int_0^t g_\infty(z) \,dz$, where $g_\infty(t) \in
 G_{\infty}(x(t))$ for $t \in [0,T]$. It should be noted that $g_\infty(\cdotp)$
 may be sample path dependent (if $\epsilon_n$ is stochastic then $g_\infty(\cdotp)$
 is a random variable). 
 Recall that $\underset{x \in \mathcal{A}}{sup} \ \lVert x \rVert $ = 
 $\delta_1 < \delta_2 < \delta_3 < \delta_4 < a$ 
 (see the sentence following $(A3)$ in Section~\ref{sec:assumptions}).
 The following is an immediate corollary of Lemma~\ref{closertoode}.
 \begin{corollary} \label{r_0}
  $\exists \ 1 < R_0 < \infty$ 
such that $\forall \ r(l) > R_0$,
$\lVert \hat{x}(T_l + \cdotp) - x(\cdotp) \rVert < \delta_3 - \delta_2$,
where $\{ l \} \subseteq \mathbb{N}$ and $x(\cdotp)$ is a solution (up to time $T$) of 
$\dot{x}(t) \in -G_\infty(x(t))$
such that $\lVert x(0) \rVert \le 1$. The form of $x(\cdotp)$ is as given by
Lemma~\ref{closertoode}.
 \end{corollary}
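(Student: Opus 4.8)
The plan is to argue by contradiction, leveraging the relative compactness of the family $\{\hat{x}(T_n + \cdot)\}_{n \ge 0}$ in $C([0,T], \mathbb{R}^d)$ together with the characterization of its limit points supplied by Lemma~\ref{closertoode}. I read the statement as asserting that once $r(l)$ is large enough, the rescaled $T$-length trajectory $\hat{x}(T_l + \cdot)$ lies within $\delta_3 - \delta_2$, measured in the sup-norm on $C([0,T],\mathbb{R}^d)$, of \emph{some} genuine solution $x(\cdot)$ of $\dot{x}(t) \in -G_\infty(x(t))$ with $\lVert x(0) \rVert \le 1$. So the natural move is to negate exactly this.

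First I would suppose that no finite $R_0$ works. Then for every $m \in \mathbb{N}$ I can select an index $l_m$ with $r(l_m) > m$ such that $\hat{x}(T_{l_m} + \cdot)$ sits at sup-distance at least $\delta_3 - \delta_2$ from every solution $x(\cdot)$ of the limiting differential inclusion satisfying $\lVert x(0) \rVert \le 1$. In particular $r(l_m) \uparrow \infty$ along this subsequence, which is the hypothesis needed to call on Lemma~\ref{closertoode}.

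Next, since $\{\hat{x}(T_n + t), t \in [0,T] : n \ge 0\}$ was already shown (via Arzel\`a--Ascoli, just before Lemma~\ref{closertoode}) to be relatively compact in $C([0,T], \mathbb{R}^d)$, the sequence $\{\hat{x}(T_{l_m} + \cdot)\}_{m \ge 1}$ admits a subsequence converging to some $x^{*}(\cdot)$. Because $r(l_m) \uparrow \infty$ along it, I would invoke Lemma~\ref{closertoode} in the subsequence form noted in the remark immediately following it, concluding that $x^{*}(t) = x^{*}(0) - \int_0^t g_\infty(z)\,dz$ with $g_\infty(z) \in G_\infty(x^{*}(z))$, i.e.\ a bona fide solution of $\dot{x}(t) \in -G_\infty(x(t))$; moreover, since $\lVert \hat{x}(T_n) \rVert \le 1$ for every $n$, the limit inherits $\lVert x^{*}(0) \rVert \le 1$. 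Thus $x^{*}(\cdot)$ is precisely an admissible competitor of the kind quantified over in the statement.

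Finally, convergence of the chosen subsequence to $x^{*}(\cdot)$ forces $\lVert \hat{x}(T_{l_m} + \cdot) - x^{*}(\cdot)\rVert < \delta_3 - \delta_2$ for all large $m$ along that subsequence, contradicting the defining property of the $l_m$, namely distance at least $\delta_3 - \delta_2$ from every admissible solution, $x^{*}$ included. Hence a finite $R_0$ exists, and since we are always free to enlarge it to exceed $1$, the bound $1 < R_0 < \infty$ is met. I expect the only genuine subtlety to be bookkeeping: ensuring that the subsequence along which $r(l_m) \uparrow \infty$ is the very one used to extract the $C([0,T],\mathbb{R}^d)$-limit, so that Lemma~\ref{closertoode} applies verbatim and the identified $x^{*}$ is a legitimate competitor in the negated statement.
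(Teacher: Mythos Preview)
Your proposal is correct and follows essentially the same contradiction argument as the paper: assume a sequence with $r(l) \uparrow \infty$ whose rescaled trajectories stay at least $\delta_3 - \delta_2$ away from every admissible solution, then invoke relative compactness together with Lemma~\ref{closertoode} to extract a subsequential limit that \emph{is} such a solution, yielding the contradiction. Your write-up is in fact more careful about the bookkeeping (explicitly constructing the $l_m$, noting the bound on $\hat{x}(T_n)$, and justifying why Lemma~\ref{closertoode} applies along the chosen subsequence), but the underlying idea is identical.
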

\begin{proof}
Assume to the contrary that $\exists \ r(l) \uparrow \infty$ such that
$\hat{x}(T_l + \cdotp)$ is at least $\delta_3 - \delta_2$ away from any solution
to the $DI$. It follows from Lemma~\ref{closertoode} that
there exists a subsequence of 
$\{ \hat{x}(T_l + t), 0 \le t \le T \ :\ l \subseteq \mathbb{N} \}$
guaranteed to converge, in $C([0,T], \mathbb{R}^d)$,
to a solution of $\dot{x}(t) \in -G_\infty (x(t))$ such that $\lVert x(0) \rVert \le 1$. 
This is a contradiction.
 \end{proof}
\begin{remark} \label{r0}
 It is worth noting that $R_0$ may be sample path dependent.
Since $T = T(\delta_2 - \delta_1) +1$ we get $\lVert \hat{x}([T_l + T]) \rVert < \delta_3$
for all $T_l$ such that $\lVert \overline{x}(T_l)\rVert (=r(l)) > R_0$.
\end{remark}
\subsection{Main Results} \label{sec:main}
We are now ready to prove the two main results of this paper. We begin by showing that
(\ref{eq:BP2}) is stable (bounded a.s.). In other words, we show that $\underset{n}{sup}
\lVert r(n) \rVert < \infty$ a.s. Once we show that the iterates are stable we use the main
results of Bena\"{i}m, Hofbauer and Sorin to conclude that the iterates converge
to a closed, connected, internally chain transitive and invariant
set of $\dot{x}(t) \in G(x(t))$.
\begin{theorem}
\label{stability}
  Under assumptions $(A1) - (A4)$, the iterates given by (\ref{eq:BP2}) are stable
  \textit{i.e.,} $\underset{n}{sup} \lVert x_{n} \rVert < \infty$ \textit{a.s.}
  Further, they converge to a closed, connected, internally chain transitive and invariant
set of $\dot{x}(t) \in G(x(t))$.
\end{theorem}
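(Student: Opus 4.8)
The plan is to split the statement into two stages: first establish stability, $\sup_n \lVert x_n \rVert < \infty$ a.s., via the projective rescaling scheme already set up, and then deduce convergence by appealing to the theory of asymptotic pseudotrajectories for differential inclusions. The whole point of stage one is to show that the rescaling factors $\{r(n)\}$ cannot escape to infinity.

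For the stability half, the key quantitative input is Remark~\ref{r0}: whenever $r(l) > R_0$ we have $\lVert \hat{x}([T_l + T]) \rVert < \delta_3$. First I would relate the grid point $[T_l+T]$ to the next projection time $T_{l+1}$. Since $T_{l+1}$ is the first grid point at or beyond $T_l + T$, the two differ by at most one step, and after rescaling $\overline{x}$ changes over this gap by a quantity that is $O(\gamma)$ uniformly (using the linear-growth bound in $(A1)$ together with the boundedness of $\hat{x}$ on $[0,T]$). Choosing $l$ large enough that this slack falls below $\delta_4 - \delta_3$ gives $\lVert \hat{x}(T_{l+1}^{-}) \rVert < \delta_4$. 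Translating back to the unscaled trajectory via $\overline{x}(T_{l+1}) = r(l)\,\hat{x}(T_{l+1}^{-})$ yields $\lVert \overline{x}(T_{l+1}) \rVert < \delta_4\, r(l)$, whence
\[
r(l+1) \;=\; \frac{\lVert \overline{x}(T_{l+1}) \rVert}{a} \vee 1 \;<\; \frac{\delta_4}{a}\, r(l).
\]
Because $\delta_4 < a$, this is a strict contraction once $r(l)$ exceeds $R_0 \vee (a/\delta_4)$, so no subsequence of $\{r(n)\}$ can tend to infinity and $\sup_n r(n) < \infty$ a.s. This is where the chain $\delta_1 < \delta_2 < \delta_3 < \delta_4 < a$ earns its keep: $\delta_2 - \delta_1$ pays for the attractor convergence time $T(\delta_2 - \delta_1)$, $\delta_3 - \delta_2$ for the DI-approximation error in Corollary~\ref{r_0}, and $\delta_4 - \delta_3$ for the discretization gap. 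From $\sup_n r(n) < \infty$ I recover $\sup_n \lVert x_n \rVert < \infty$: at the sampling times $\lVert \overline{x}(T_n) \rVert \le a\sup_n r(n)$, and over each intervening block of length at most $T$ a discrete Gronwall estimate based on $\lVert g(x_k) \rVert \le K(1 + \lVert x_k \rVert)$ controls the growth by a factor like $e^{KT}$, so all iterates are bounded.

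Once stability holds the rescaling is no longer needed and I would run the ODE/DI method directly on $G$. Repeating the argument of Lemma~\ref{closertoode} verbatim, but without the factor $1/c$ (the iterates now lie in a fixed compact set, so one uses $G$ itself in place of $G_\infty$ and $(A1)$ in place of $(A4)$), shows that every limit point of $\{\overline{x}(t(n)+\cdotp)\}$ is a solution of $\dot{x}(t) \in -G(x(t))$; equivalently, $\overline{x}(\cdotp)$ is an asymptotic pseudotrajectory of the associated set-valued semiflow. Invoking the limit-set theorem of Bena\"{i}m, Hofbauer and Sorin \cite{Benaim05}, the limit set of $\overline{x}(\cdotp)$ is then nonempty, compact, connected, invariant and internally chain transitive for this DI, which is precisely the claimed conclusion.

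The main obstacle is the stability half, and within it the bookkeeping that turns the qualitative statement of Remark~\ref{r0} into the clean multiplicative contraction $r(l+1) < (\delta_4/a)\,r(l)$. One must control the discretization gap between $[T_l+T]$ and $T_{l+1}$ uniformly in $l$, and must contend with the fact that $R_0$ in Corollary~\ref{r_0} is sample-path dependent, so the contraction only kicks in beyond a random threshold. Arguing that this nonetheless forces $\sup_n r(n) < \infty$ almost surely, rather than merely along deterministic subsequences, is the delicate point of the proof.
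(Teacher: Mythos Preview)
Your plan matches the paper's: rescale, obtain the one-step contraction $\lVert \hat{x}(T_{l+1}^{-}) \rVert < \delta_4$ from Remark~\ref{r0} plus a step-size slack, and then appeal to Bena\"{i}m--Hofbauer--Sorin once stability is in hand. The convergence half is exactly what the paper does (it simply cites \textit{Theorem 3.6 \& Lemma 3.8} of \cite{Benaim05}), and your reading of the chain $\delta_1<\delta_2<\delta_3<\delta_4<a$ is accurate.

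There is, however, a genuine logical gap in your stability argument. From the contraction $r(l+1) < (\delta_4/a)\, r(l)$ whenever $r(l) > R_0$ (and $l$ large) you conclude directly that no subsequence of $\{r(n)\}$ can diverge. That inference is not valid on its own: a one-step contraction above a threshold does not preclude unboundedness, because the sequence could repeatedly drop below $R_0$ and then jump arbitrarily high in the next block. What is missing is a bound on the size of a single up-jump from inside $\overline{B}_{R_0}(0)$. You do invoke the discrete Gronwall estimate, but only after asserting $\sup_n r(n) < \infty$, to pass to $\sup_n \lVert x_n \rVert < \infty$; the same estimate is needed \emph{earlier}, to close the boundedness of $\{r(n)\}$ itself.

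The paper handles exactly this point, by contradiction: assuming $r(l)\uparrow\infty$ along a subsequence, the contraction forces the unscaled trajectory to fall exponentially until it re-enters $\overline{B}_{R_0}(0)$; one then looks at the \emph{last} time $\overline{x}(\cdot)$ exits $\overline{B}_{R_0}(0)$ before reaching the large value $\lVert \overline{x}(T_l)\rVert$, and observes that this exit must occur within a single window of length at most $T+1$. Gronwall bounds the growth over such a window by a fixed constant, contradicting $r(l)\uparrow\infty$. Your direct route can be completed the same way: use Gronwall to show that $r(l)\le R_0$ forces $r(l+1)\le C$ for a fixed $C=C(K,T,aR_0)$, and combine this with the contraction above $R_0$ to obtain $\sup_{l\ge N} r(l) \le \max\{r(N), C\}$.
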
 
\begin{proof} 
 First, we show that the iterates are stable. To do this we start by assuming the negation
 \textit{i.e.,} $P( \underset{n}{\sup} \ r(n) = \infty) > 0$. 
 Clearly, there exists $\{l\} \subseteq \{n\}$ such that $r(l) \uparrow \infty$.
 Recall that $T_l = t(m(l))$ and that $[T_{l}+T] = max\{t(k) \ |\ t(k) \le T_{l}+T \}$.
 
We have $\lVert x(T) \rVert < \delta_2$ since $x(\cdotp)$ is a solution, up to time $T$, 
to the $DI$ given by $\dot{x}(t) \in G_{\infty}(x(t))$ and $T = T(\delta_2 - \delta_1) + 1$.
Since the rescaled trajectory is obtained by projecting onto a compact set, it follows that
the trajectory is bounded. In other words, $\underset{t \ge 0}{\sup} \ \lVert \hat{x}(t) \rVert \le K_w < \infty$,
where $K_w$ could be sample path dependent. Now,
we observe that there exists $N$ such that all of the following happen: \\
(i) $m(l) \ge N$ $\implies$  $r(l) \ > \ R_0$. [since $r(l) \uparrow \infty$]\\ 
(ii)
$m(l) \ge N$ $\implies$ $\lVert \hat{x}([T_{l}+T]) \rVert < \delta_3$.
[since $r(l) \ > \ R_0$ and Remark~\ref{r0}] \\
(iii)  $n \ge N$ $\implies$
$\gamma (n) < \frac{\delta_4 - \delta_3}{K(1+K_{\omega})}$. [since $\gamma(n) \to 0$]

We have
$\underset{x \in \mathcal{A}}{\sup} \ \lVert x \rVert = \delta_1 < \delta_2 < \delta_3 < \delta_4 <a$ 
(see the sentence following $(A3)$ in Section~\ref{sec:assumptions} for more details).
Let $m(l) \ge N$ and $T_{l+1}=t(m(l+1)) \ = \ t(m(l) + k + 1)$ for some $k > 0$.
If $T_l + T \neq T_{l+1}$ then
$t(m(l)+k) \ = \ \left[ T_{l}+T \right]$, else if $T_l + T = T_{l+1}$
then $t(m(l)+k+1) \ = \ \left[ T_{l}+T \right]$. We proceed assuming that 
$T_l + T \neq T_{l+1}$ since the other case can be identically analyzed.
Recall that $\hat{x}(T_{n+1}^{-}) \ =$ $\lim_{t \uparrow t(m(n+1))} \hat{x}(t)$, $t \in 
\left[T_{n}, T_{n+1} \right)$ and $n \ge 0$.
Then,
\begin{equation} \nonumber
 \hat{x}(T_{l+1}^{-})\ = \ \hat{x}(t(m(l)+k)) \ - \gamma (m(l)+k) 
\hat{g}(t(m(l)+k)).
\end{equation}
Taking norms on both sides we get,
\begin{equation} \nonumber
\lVert \hat{x}(T_{l+1}^{-}) \rVert\ \le \  
\lVert \hat{x}(t(m(l)+k)) \rVert \ + 
\gamma (m(l)+k) \lVert \hat{g}(t(m(l)+k)) \rVert. 
\end{equation} 
As a consequence of the choice of $N$ we get:
\begin{equation}
 \lVert \hat{g}(t(m(l)+k)) \rVert \ \le \ K \left( 1 + \lVert \hat{x}(t(m(l)+k) \rVert \right)
 \le \ K \left( 1 + K_{\omega} \right).
\end{equation}
Hence,
\begin{equation}\nonumber
\lVert \hat{x}(T_{l+1}^{-}) \rVert \ \le \ 
\lVert \hat{x}(t(m(l)+k))\rVert \ + \gamma (m(l)+k) K(1+K_{\omega}).
\end{equation}
In other words, $\lVert \hat{x}(T_{l+1}^{-}) \rVert \ < \ \delta_4$. Further,
\begin{equation} \label{eq:mainthm}
 \frac{\lVert \overline{x}(T_{l+1}) \rVert}{\lVert \overline{x}(T_{l}) 
 \rVert} \ = \ \frac{\lVert \hat{x}(T_{l+1}^{-}) \rVert}{\lVert \hat{x}(T_{l}) 
 \rVert} \ < \frac{\delta_4}{a} < 1. 
\end{equation}

It follows from (\ref{eq:mainthm}) that $\lVert \overline{x}(T_{n+1}) \rVert < 
\frac{\delta_4}{a} \lVert \overline{x}(T_n) \rVert $ if $\lVert \overline{x}(T_n) \rVert > R_0$.
From Corollary~\ref{r_0} and the aforementioned we get that
the trajectory falls at an exponential rate till it enters $\overline{B}_{R_{0}}(0)$. 
Let $t \le T_{l}$, $t \in \left[T_{n}, T_{n+1}\right)$ and $n+1 \le l$, 
be the last time that $\overline{x}(t)$ jumps from within $\overline{B}_{R_{0}}(0)$ to
the outside of the ball. It follows that 
$\lVert \overline{x}(T_{n+1}) \rVert \ge \lVert \overline{x}(T_l) \rVert$.
Since $r(l) \uparrow \infty$, $\overline{x}(t)$ would be forced to make 
larger and larger jumps within an interval of length $T+1$.
This leads to a contradiction since the maximum jump size within any fixed time interval
can be bounded using the \textit{Gronwall} inequality. Thus, the iterates are shown to
be stable.

 It now follows from \textit{Theorem 3.6 \& Lemma 3.8} 
of \textbf{Bena\"{i}m, Hofbauer and Sorin} \cite{Benaim05} that the iterates
converge almost surely to a closed, connected, internally chain transitive and invariant
set of $\dot{x}(t) \in G(x(t))$.
\end{proof}
Now that the $GD$ with non-diminishing, bounded errors, given by
(\ref{eq:BP2}), is shown to be stable (bounded a.s.),
we proceed to show that these iterates in fact converge to an arbitrarily small neighborhood of
the minimum set. The proof uses \textit{Theorem 3.1}
of Bena\"{i}m, Hofbauer and Sorin \cite{Benaim12} that we state below.
First, we make a minor comment on the limiting set of GD with errors.

Recall from Remark~\ref{benaimremark}
that the chain recurrent set of $\dot{x}(t) = - \nabla f(x(t))$ is a subset of $\mathcal{M}$,
where $\mathcal{M}$ is the minimum set of $f$. We consider two cases: $(a)$ $\mathcal{M}$
is the unique global attractor of $\dot{x}(t) = - \nabla f(x(t))$; $(b)$ $\mathcal{M}$
comprises of multiple local attractors. Suppose we are in case $(a)$,
it can be shown that any compact neighborhood, $\mathcal{M} \subseteq \mathcal{K} \subset \mathbb{R}^d$, is a fundamental
neighborhood of $\mathcal{M}$. It follows from Theorem~\ref{stability}
 that the iterates are bounded almost surely. In other words, 
 $\overline{x}(t) \in \mathcal{K}_0$, $\forall$ $t \ge 0$, for some compact set $\mathcal{K}_0$,
 that could be sample path dependent, such that $\mathcal{M} \subseteq \mathcal{K}_0$. In this case, GD with errors is expected
 to converge to a small neighborhood of $\mathcal{M}$.
 Suppose we are in case $(b)$, we need to consider $\mathcal{M}' \subseteq \mathcal{M}$ such that
 the aforementioned $\mathcal{K}_0$ is a fundamental neighborhood of it. 
 In this case, GD with errors is expected
 to converge to a small neighborhood of $\mathcal{M}'$.
\\ $ \\$
 We are now ready to present \textit{Theorem 3.1}, \cite{Benaim12}.
The statement has been interpreted to the setting of this chapter for the sake of convenience.
\\
$[$\textit{\textbf{Theorem 3.1}}, \cite{Benaim12}$]$ Given $\delta > 0$, 
 there exists $\epsilon(\delta) > 0$ such that the chain recurrent set of $\dot{x}(t) = - \nabla f(x(t)) +
 \overline{B}_r(0)$ is within the $\delta$-open neighborhood of the chain recurrent set of
 $\dot{x}(t) = - \nabla f(x(t))$ for all $r \le \epsilon(\delta)$.
\begin{theorem} \label{corr}
 Given $\delta > 0$, there exists $\epsilon(\delta) > 0$ such that 
 the $GD$ with bounded errors given by (\ref{eq:BP2}) converges to $N^{\delta}(\mathcal{M})$,
 the $\delta$-neighborhood of the minimum set of $f$, provided $\epsilon < \epsilon(\delta)$.
 Here $\epsilon$ is the bound for estimation errors from assumption $(A1)$.
\end{theorem}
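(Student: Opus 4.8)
The plan is to chain together Theorem~\ref{stability}, the standard containment of internally chain transitive sets inside chain recurrent sets, and the stated \emph{Theorem 3.1} of \cite{Benaim12}, closing the argument with the assumption recorded in Remark~\ref{benaimremark} that the chain recurrent set of $\dot{x}(t) = -\nabla f(x(t))$ lies inside the minimum set $\mathcal{M}$. The entire proof is an inclusion chain; no new estimates are needed beyond what has already been established.

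First I would invoke Theorem~\ref{stability} to conclude that the iterates of (\ref{eq:BP2}) are stable almost surely and that their limit set is a closed, connected, internally chain transitive and invariant set of the differential inclusion associated with the recursion, namely $\dot{x}(t) \in -G(x(t))$. Since the error ball $\overline{B}_\epsilon(0)$ is symmetric about the origin, $-G(x) = -\nabla f(x) + \overline{B}_\epsilon(0)$, so this DI is \emph{exactly} the perturbed gradient flow $\dot{x}(t) = -\nabla f(x(t)) + \overline{B}_\epsilon(0)$ that appears in the cited \emph{Theorem 3.1} with $r = \epsilon$. Next I would use the standard fact from the theory of \cite{Benaim05} that every internally chain transitive set of a DI is contained in the chain recurrent set of that same DI. Applying this, the limit set of the iterates sits inside the chain recurrent set of $\dot{x}(t) = -\nabla f(x(t)) + \overline{B}_\epsilon(0)$.

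For the prescribed $\delta > 0$, \emph{Theorem 3.1} of \cite{Benaim12} then supplies an $\epsilon(\delta) > 0$ such that, whenever $\epsilon \le \epsilon(\delta)$, this chain recurrent set is contained in $N^\delta$ of the chain recurrent set of the unperturbed flow $\dot{x}(t) = -\nabla f(x(t))$. Finally, by the assumption of Remark~\ref{benaimremark} (guaranteed, for instance, by \emph{Proposition 4} of \cite{hurley}), the chain recurrent set of $\dot{x}(t) = -\nabla f(x(t))$ is a subset of $\mathcal{M}$; since $A \subseteq B$ trivially gives $N^\delta(A) \subseteq N^\delta(B)$, the three inclusions compose to place the limit set of the iterates inside $N^\delta(\mathcal{M})$. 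This is precisely the asserted convergence to the $\delta$-neighborhood of the minimum set, and it holds for every $\epsilon < \epsilon(\delta)$.

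Since the substance is the correct assembly of cited results, the care lies entirely in the bookkeeping. The two points I would check most carefully are, first, that the DI produced by Theorem~\ref{stability} is genuinely the same perturbed flow that \emph{Theorem 3.1} addresses (this is where the sign convention and the symmetry of $\overline{B}_\epsilon(0)$ matter, since the recursion descends along $-g$), and second, that the passage from ``internally chain transitive'' to ``chain recurrent'' is invoked for the correct DI. The case distinction made in the paragraph preceding the statement---whether $\mathcal{M}$ is a single global attractor or a union of local attractors---does not obstruct the neighborhood conclusion, because the chain-recurrent-set framework of \cite{Benaim12} treats both uniformly; one need only observe that $\epsilon(\delta)$, and the compact set $\mathcal{K}_0$ confining the bounded trajectory, may be sample-path dependent without affecting the almost-sure conclusion.
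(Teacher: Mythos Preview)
Your proposal is correct and follows essentially the same approach as the paper's proof: invoke Theorem~\ref{stability} to place the limit set inside an internally chain transitive set of the perturbed DI, then apply \emph{Theorem 3.1} of \cite{Benaim12} together with the standing assumption of Remark~\ref{benaimremark} to land in $N^\delta(\mathcal{M})$. Your write-up is in fact more careful than the paper's own proof in that you make explicit the symmetry step $-G(x) = -\nabla f(x) + \overline{B}_\epsilon(0)$ and the passage from internally chain transitive to chain recurrent, both of which the paper leaves implicit.
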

\begin{proof}
As stated in Remark~\ref{benaimremark}, the
chain recurrent set of $\dot{x}(t) = - \nabla f(x(t))$ is assumed to be a subset of the minimum set of $f$.
Note that the iterates given by (\ref{eq:BP2}) track a solution to $\dot{x}(t) \in - \left( \nabla f(x(t))
+ \overline{B}_\epsilon(0) \right)$. It follows from \textit{Theorem 3.1, \cite{Benaim12}} that 
(\ref{eq:BP2}) converge to a $\delta$-neighborhood of the chain recurrent set provided 
$\epsilon < \epsilon(\delta)$. In other words, GD with errors converges to a small neighborhood of the 
minimum set provided GD without errors is guaranteed to converge to the minimum set.
\end{proof}
\subsection{Implementing GD methods using SPSA} \label{implementBP}
Gradient estimators are often used in the implementation of $GD$ methods such as 
$SPSA$, \cite{spall}. When using $SPSA$ 
the update rule for the $i^{\text{th}}$ coordinate is given by
\begin{equation}
 \label{SPSA}
 x^i _{n+1} = x^i _n - \gamma(n) \left( \frac{f(x_n + c_n \Delta_n) - f(x_n - c_n \Delta_n)}{2 c_n \Delta^i _n} \right), 
\end{equation}
where $x_n = \left( x^1 _n , \dots, x^d _n \right)$ is the underlying parameter, 
$\Delta _n = \left( \Delta^1 _n , \dots, \Delta^d _n \right)$ is a sequence of 
perturbation random vectors such that $\Delta ^i _n$, $1 \le i \le d$, $n \ge 0$ are $i.i.d.$.
It is common to assume $\Delta^i_n$ to be symmetric, Bernoulli distributed, taking values $\pm 1$ $w.p.\ 1/2$. The
sensitivity parameter $c_n$ is such that the following are assumed: $c_n \to 0$ as $n \to \infty$;
$\sum _{n \ge 0} \left( \frac{\gamma(n)}{c_n} \right) ^2 < \infty$, see $A1$ of \cite{spall}. 
Further, $c_n$ needs to be chosen such that the estimation errors go to zero. This, in particular,
could be difficult since the form of the function $f$ is often unknown. One may need to run experiments
to find each $c_n$.
Also, smaller values
of $c_n$ in the initial iterates tends to blow up the variance which in turn affects convergence.
For these reasons, in practice, one often lets $c_n := c$ (a small constant) for all $n$. If we assume additionally
that the second derivative of $f$ is bounded, then it is easy to see that the estimation errors are
bounded by $\epsilon(c)$ such that $\epsilon(c) \to 0$ as $c \to 0$. 
\textit{\textbf{Thus, keeping $c_n$ fixed to $c$ forces the estimation errors to be bounded at each stage.
In other words, SPSA with a constant sensitivity parameter falls under the purview of the framework presented in this paper.}}
Also, it is worth noting that the iterates
are assumed to be stable (bounded a.s.) in \cite{spall}. However in our framework,
stability is shown under verifiable conditions even when $c_n = c, \ n \ge 0$.
\\ \indent
We arrive at the important question of how to choose this constant $c$ in practice such that fixing $c_n := c$
we still get the following: (a) the iterates are stable and (b) $GD$ implemented in this manner converges to
the minimum set. Suppose the simulator wants to ensure that the iterates converge to a $\delta$-neighborhood
of the minimum set \textit{i.e.,} $N^\delta (\mathcal{M})$, then it follows from Theorem~\ref{corr}
that there exists $\epsilon(\delta)>0$ such that the $GD$ converges to $N^\delta (\mathcal{M})$ provided
the estimation error at each stage is bounded by $\epsilon(\delta)$. Now,
$c$ is chosen such that $\epsilon(c) \le \epsilon(\delta)$. 
The simulation is carried out by fixing the sensitivity parameters to this $c$. 
As stated earlier one may need to carry out 
experiments to find such a $c$. However, the advantage is that we only need to do this once
before starting the simulation. Also, the iterates are guaranteed to be stable and converge
to the $\delta$-neighborhood of the minimum set provided $(A1)$-$(A4)$ are satisfied.
\section{Experimental results} \label{experiments}
The experiments presented in this section consider a
quadratic objective function $f: \mathbb{R}^d \to \mathbb{R}$ with $f(x):= x^T Q x$, where
$Q$ is a positive definite matrix.
The origin is the unique global minimizer of $f$. On the other hand, if one were to conduct
these experiments using $f$ with
multiple local minima, then their results are expected to be similar.
\subsection{Exp.1: SPSA with constant sensitivity parameters (SPSA-C)}
\begin{figure}
\centering
\includegraphics[width=15cm, height=17cm]{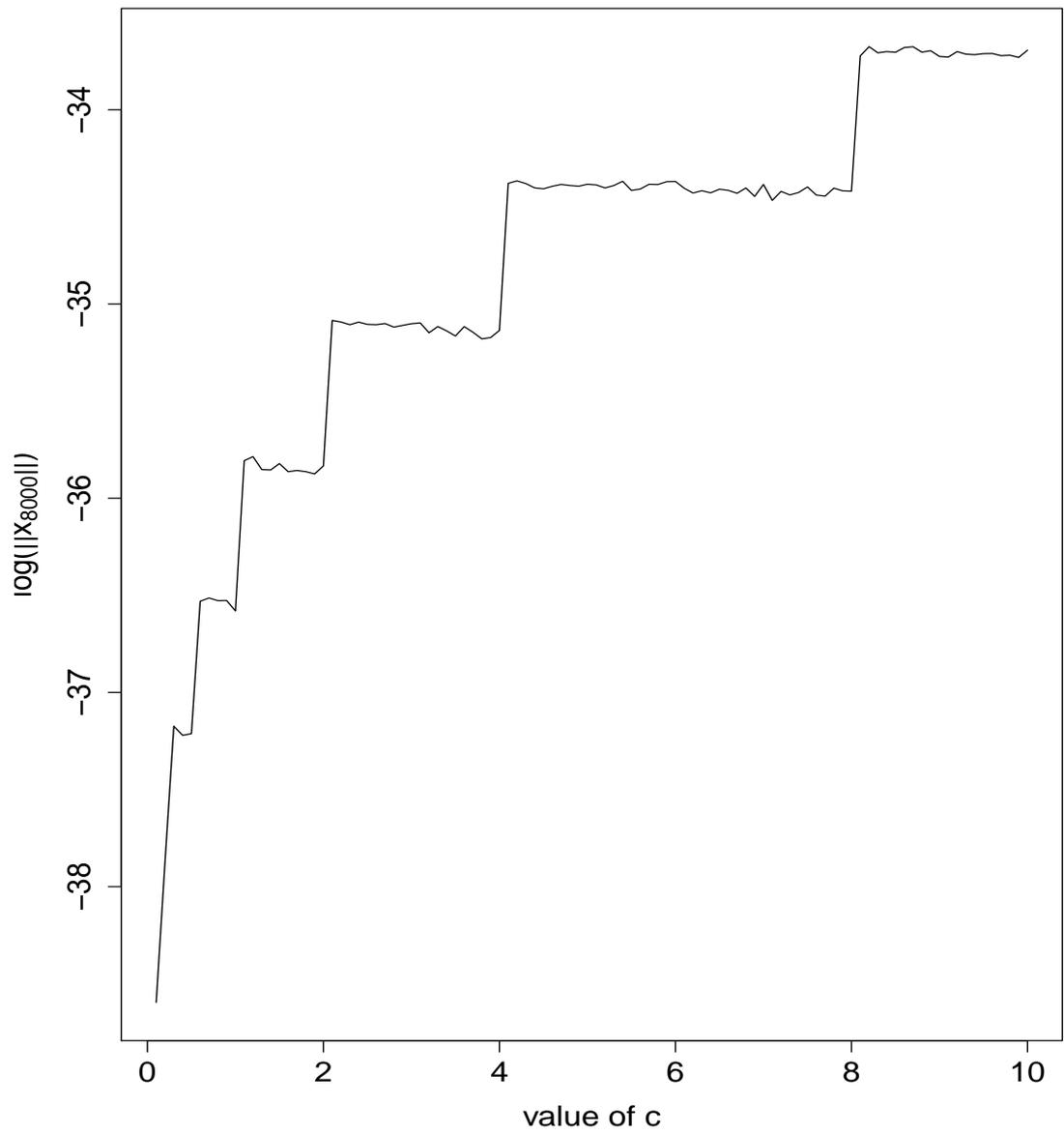}
\caption{Average performance variation of 20 independent simulation runs as a function of the sensitivity parameter $c$.} 
\label{fig1}
\end{figure}
\begin{figure}
\centering
\includegraphics[width=15cm, height=17cm]{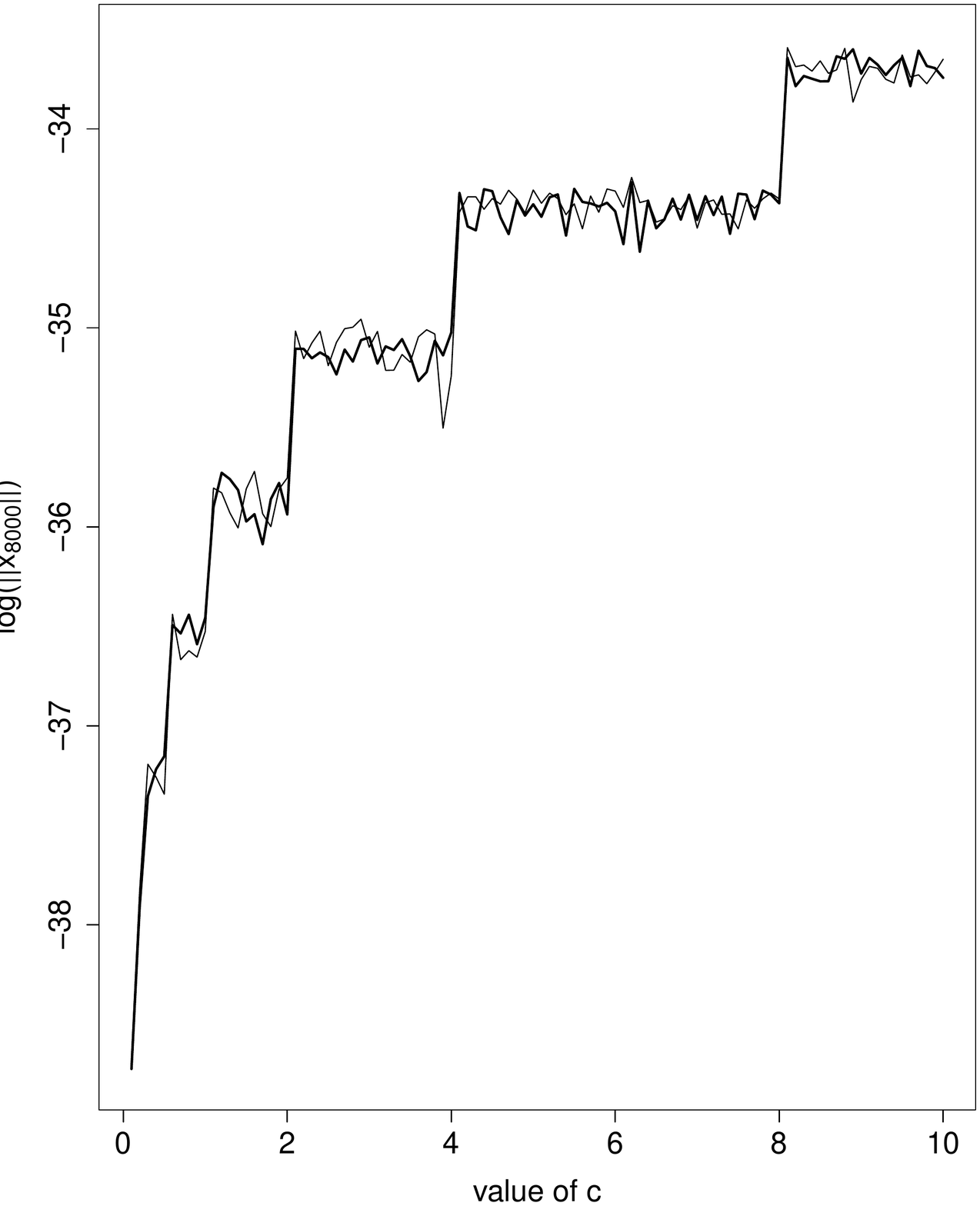}
\caption{Two sample runs.} 
\label{fig1.1}
\end{figure}
First we consider $SPSA$ with constant sensitivity parameters to find the minimum set of $f$. 
This scheme is given by (\ref{SPSA}) but with $c_n = c$ for all $n$, and we refer to it
as SPSA-C.

Parameter settings:\\
\textbf{(1)}  \textit{The positive definite matrix $Q$ and the starting point $x_0$ were randomly chosen.}
 \textbf{(2)} The dimension $d=10$. The number of iterations of SPSA-C was $8000$.
 \textbf{(3)} $c$ was varied from $0.1$ to $10$. For each value of $c$, SPSA-C
 was run for $8000$ iterations and $ \lVert x_{8000} \rVert$ was recorded. Since
 origin is the unique global minimizer of $f$, $ \lVert x_{8000} \rVert$ records
 the distance of the iterate after $8000$ iterations from the origin.
 \textbf{(4)} For $0 \le n \le 7999$, we chose the following step-size sequence:
 $a(n) = \frac{1}{(n \mod 800) + 100}$, $n \ge 1$.
This step-size sequence seems to expedite the convergence 
of the iterates to the minimum set. We were able to use this sequence
since our framework does not impose extra restrictions
on step-sizes, unlike \cite{spall}.

Since we keep the sensitivity parameters fixed the implementation was greatly simplified.
Based on the theory presented in this paper,
for larger values of $c$ one expects the iterates to be farther from the origin than for smaller values of $c$. 
This theory is corroborated by the experiment illustrated in Fig.~\ref{fig1}.

Note that to generate $Q$ we first randomly generate a column-orthonormal matrix $U$ and let $Q:=U \Sigma U ^T$,
where $\Sigma$ is a diagonal matrix with strictly positive entries. To generate
$U$, we sample it's entries independently from a Gaussian distribution and then 
apply Gram-Schmidt orthogonalization 
to the columns.

Fig.~\ref{fig1} shows the average performance of $20$ independent simulation 
runs (for each $c$) of the experiment, where $Q$
and $x_0$ were randomly chosen for each run; Fig.~\ref{fig1.1} shows two sample runs.
In Fig.~\ref{fig1} and~\ref{fig1.1} the $x$-axis represents the values of $c$ ranging from $0.1$
to $10$ in steps of $0.01$. The $y$-axis in Fig.~\ref{fig1} represents the
\textit{logarithm of the average of corresponding distances from the origin
after $8000$ iterations} \textit{i.e.,} $\log \left( 1/20\sum_{i=1}^{20} \lVert x^i_{8000} \rVert \right)$,
where $x^i_{8000}$ is the iterate-value after $8000$ runs from the $i^{th}$ simulation.
The $y$-axis in Fig.~\ref{fig1.1} represents the \textit{logarithm of the corresponding distances from the origin
after $8000$ iterations} \textit{i.e.,} $\log (\lVert x_{8000} \rVert )$. 
Note that for
$c$ close to $0$, $ x_{8000} \in B_{ e^{-38}}(0)$ 
while for $c$ close to $10$,
$ x_{8000} \in B_{e^{-32}}(0)$ only.
Also note that the graph has a series of  ``steep rises'' followed by ``plateaus''. These indicate that
for values of $c$ within the same plateau the iterate converges to the same neighborhood of
the origin. As stated earlier for larger values of $c$ the iterates are farther from the origin than
for smaller values of $c$.
\subsection{Exp.2: GD with constant gradient errors}
For the second experiment we ran the following recursion for $1000$ iterations:
\begin{equation}\label{ex:2}
 x_{n+1} = x_n + 1/n \left( Q x_n + \underline{\epsilon} \right), \text{ where}
\end{equation}
 \textbf{(a)} the starting point $x_0$ was randomly chosen and dimension $d=10$.
 \textbf{(b)} the matrix $Q$ was a randomly generated positive definite matrix ($Q$ is generated as explained before).
 \textbf{(c)} $\underline{\epsilon} = \left(
                                \epsilon / \sqrt{d}
                                \ldots 
                                \epsilon / \sqrt{d}
                               \right)
$, is the constant noise-vector added at each stage and $\epsilon \in \mathbb{R}$.

Since $Q$ is positive definite, we expect (\ref{ex:2}) to converge to the origin
when $\epsilon = 0$ in the noise-vector. 
A natural question to ask is the following: If a ``small'' noise-vector is added at each stage 
does the iterate sequence still
converge to a small neighborhood of the origin or do the iterates diverge?
It can be verified that (\ref{ex:2}) satisfies $(A1)$-$(A4)$
of Section~\ref{sec:assumptions} for any $\epsilon \in \mathbb{R}$. Hence it follows from
Theorem~\ref{stability} that the iterates are stable and do not diverge.
In other words, the addition of such a noise does not accumulate and force the iterates to diverge. As in
the first experiment we expect the iterates to be farther from the origin for larger values of $\epsilon$.
This is evidenced by the plots in Fig.~\ref{fig2} and~\ref{fig2.1}. 
\begin{figure}
\centering
\includegraphics[width=15cm, height=17cm]{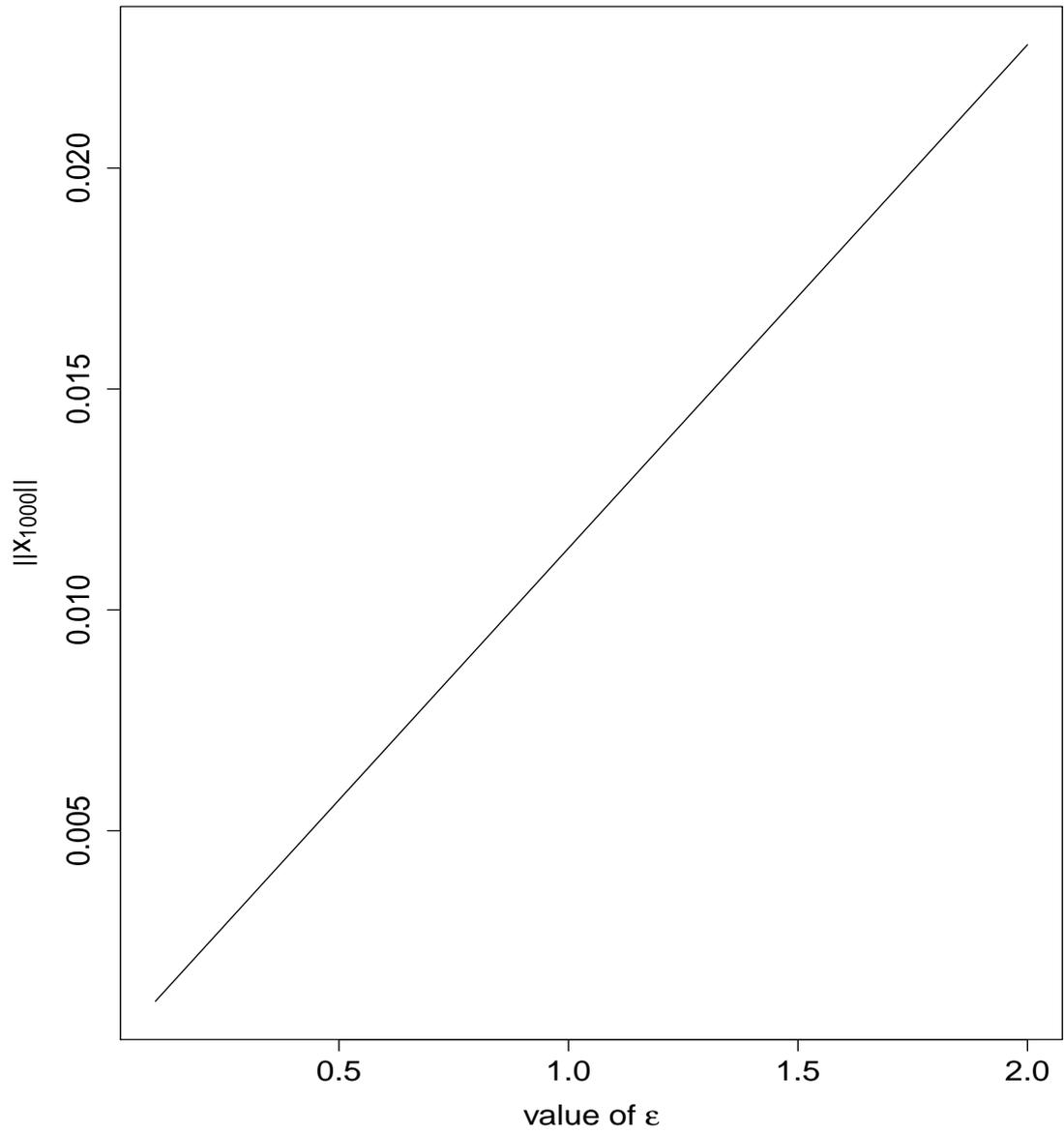}
\caption{Average performance variation of 20 independent simulation runs as a function of the neighborhood parameter $\epsilon$.} 
\label{fig2}
\end{figure}
As before, Fig.~\ref{fig2} shows the 
average performance of $20$ independent simulation runs (for each $\underline{\epsilon}$) and Fig.~\ref{fig2.1} shows three of these sample runs.
The $x$-axis in Fig.~\ref{fig2} and~\ref{fig2.1} represents values of the $\epsilon$ parameter in (\ref{ex:2}) 
that varies from $0.1$ to $2$ \textit{i.e.,}
$\lVert \underline{\epsilon} \rVert$ varies from $0.1$ to $2$ in steps of $0.01$. 
The $y$-axis in Fig.~\ref{fig2} represents the average distance of the iterate
from the origin after $1000$ iterations \textit{i.e.,} $1/20 \sum_{i=1}^{20}\lVert x^i_{1000} \rVert$,
where $x^i_{1000}$ is the iterate-value after $1000$ iterations from the $i^{th}$ run. 
The $y$-axis in Fig.~\ref{fig2} represents $\lVert x^i_{1000} \rVert$. 
For $\epsilon$ close to $0$
the iterate (after $1000$ iterations) is within $B_{0.0003}(0)$ while for $\epsilon$ close to $2$ 
the iterate (after $1000$ iterations) is only within $B_{0.1}(0)$.
\begin{figure}
\centering
\includegraphics[width=15cm, height=17cm]{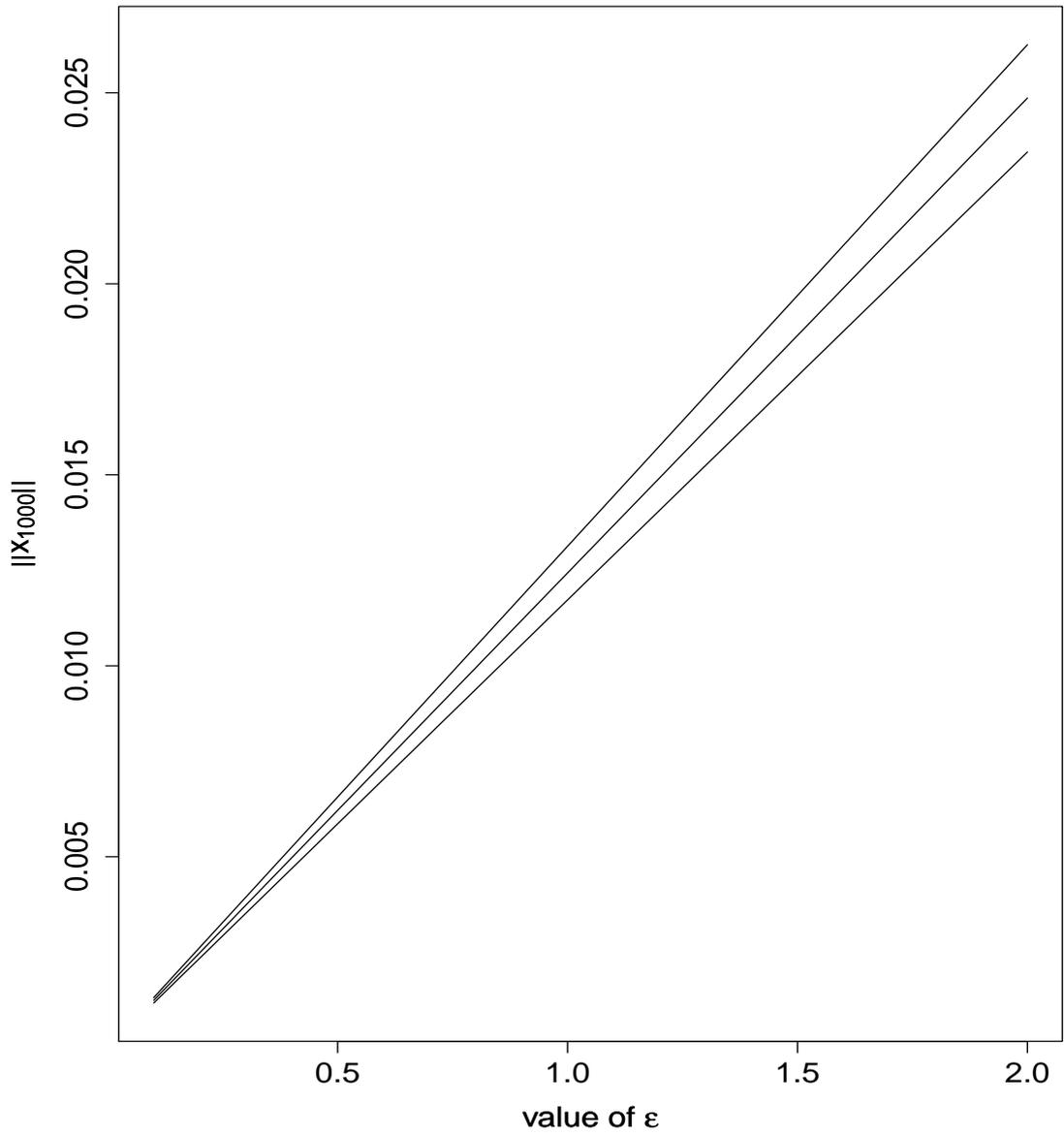}
\caption{Three sample runs.} 
\label{fig2.1}
\end{figure}
\newpage
\section{Extensions and conclusions}\label{extensions} 
In this paper we have provided sufficient conditions for stability and convergence
(to a small neighborhood of the minimum set) of $GD$ with bounded and (possibly) non-diminishing 
errors. 
To the best of our knowledge our analysis of $GD$
with errors is new to the literature. In addition to being easily verifiable,
the assumptions presented herein do not affect the choice of step-size. Finally, experimental results
presented in Section~\ref{experiments} are seen to validate the theory. 
\textit{An important step in the analysis of `$GD$ with errors' is to show stability (almost sure boundedness)
of the iterates. It is worth noting that
this step is not straightforward even in the case of asymptotically vanishing
errors, \textit{i.e.,} $\epsilon_n \to 0$ as $n \to \infty$.
}
An extension to our main results is the
introduction of an additional martingale noise term $M_{n+1}$ at stage $n$. Our results 
will continue to hold provided $\sum_{n \ge 0} \gamma(n) M_{n+1} < \infty$ a.s.
Another extension is to analyze implementations of 
 $GD$ using Newton's method with bounded, (possibly) non-diminishing errors.
 To see this, define $G(x) := H(x)^{-1} \nabla f(x) + \overline{B}_\epsilon(0)$ in $(A1)$; $G_\infty$
 changes accordingly. 
 Here $H(x)$ (assumed positive definite) denotes the Hessian evaluated at $x$.
 Theorems~\ref{stability} \& \ref{corr} hold under this new definition of $G$
 and appropriate modifications of $(A1)-(A4)$.
Our analysis is valid in situations where the function $f$ is not differentiable at some points, however,
the error in the gradient estimate  at any stage is bounded.
An interesting
future direction will be to derive convergence rates of gradient schemes with 
non-diminishing errors.
More generally, it would be interesting to derive convergence rates of stochastic approximation algorithms with set-valued
mean-fields.
\bibliographystyle{plain}
\bibliography{example_paper}
\end{document}